\documentclass[11pt,letterpaper,reqno]{amsart}
\usepackage[T1]{fontenc}
\usepackage[utf8]{inputenc}
\usepackage{lmodern}
\usepackage[margin=1in]{geometry}
\usepackage{amsmath,amssymb,amsthm,mathtools}
\usepackage{microtype}
\usepackage{xurl}
\usepackage[colorlinks=true,linkcolor=blue,citecolor=blue,urlcolor=blue]{hyperref}
\hypersetup{pdftitle={Hypercontractivity, entropy contraction, and approximate tensorization},pdfauthor={}}
\numberwithin{equation}{section}
\newtheorem{theorem}{Theorem}[section]
\newtheorem{lemma}[theorem]{Lemma}
\newtheorem{introtheorem}{Theorem}

\newtheorem{proposition}[theorem]{Proposition}
\newtheorem{corollary}[theorem]{Corollary}
\theoremstyle{definition}
\newtheorem{definition}[theorem]{Definition}
\newtheorem{example}[theorem]{Example}
\newtheoremstyle{remarkbold}{\topsep}{\topsep}{\normalfont}{}{\bfseries}{.}{.5em}{}
\theoremstyle{remarkbold}
\newtheorem{remark}[theorem]{Remark}

\newcommand{\cM}{\mathcal M}
\newcommand{\cN}{\mathcal N}
\newcommand{\cL}{\mathcal L}
\DeclareMathOperator{\Tr}{Tr}
\DeclareMathOperator{\id}{id}
\DeclareMathOperator{\supp}{supp}
\DeclareMathOperator{\Ent}{Ent}

\title{Quantum Entropy Contraction and Factorization from Hypercontractivity}
\author{LI GAO AND LIJUN WANG}
\date{}

\begin{document}
\maketitle
\begin{abstract}We prove that hypercontractivity implies entropy contraction for a single quantum channel, without a detailed balance condition.  For primitive quantum Markov semigroups that are KMS-symmetric with respect to a faithful invariant state \(\sigma\), we obtain the modified Log-Sobolev bound
$\alpha_1\geq \frac{\lambda}{(2+\log\|\sigma^{-1}\|_\infty)}$
where $\lambda$ is the spectral gap. This removes the assumption of \(L_p\)-regularity for the comparison through the log-Sobolev constant. As an application, we show that the hypercontractivity of an average of two conditional expectations implies the approximate tensorization of relative entropy. 
\end{abstract}

\section{Introduction}
Quantitative convergence to equilibrium can be studied through either norm estimates
or through entropy inequalities. Hypercontractivity describes the improvement
of weighted $L_p$ norms under a Markov evolution, while entropy contraction
measures the decrease of relative entropy. In this work, we study the implications from hypercontractive estimate to entropy contraction in finite-dimensional quantum systems.
We note that the passage from hypercontractivity (HC) to entropy contraction (EC) already
makes sense for a single channel. This is inspired by the work of Salez~\cite{r18}
in the classical setting, without reversibility or
regularity assumptions. In the quantum setting, the corresponding statement
follows from Hirche--Rouz\'e--Fran\c{c}a~\cite[Proposition~5.4]{r11} through
KMS duality. We combine it
with a centering criterion that removes the multiplicative defect in a hyperbounded
(HB) estimate using strict $L_2$-contraction on centered elements.

Let $\sigma$ be a faithful state, and 
$E_\sigma(X)=\Tr(\sigma X)I$ be the replacer map. The symmetric weighted norms is
$\|X\|_{p,\sigma}=\|\sigma^{1/(2p)}X\sigma^{1/(2p)}\|_p$ and $D(\rho||\sigma)=\Tr(\rho\log\rho-\rho\log \sigma )$ denotes the quantum relative entropy. We write $\Phi_*$ as the trace dual of a Heisenberg-picture channel $\Phi$.

\begin{introtheorem}[Single Channel Contraction]\label{thm:intro-channel}
	Let $\Phi$ be a unital completely positive map preserving $\sigma$.
	\begin{enumerate}
		\item[(i)] If $1<p<q<\infty$ and
		$\|\Phi\|_{p\to q,\sigma}\leq1$, then, for every state $\rho$,
		\[
		D(\Phi_*\rho\Vert\sigma)\leq\frac pqD(\rho\Vert\sigma).
		\]
		\item[(ii)] 
		If, for some $2<q<\infty$, $M_q\geq1$, and $0<r<1$,
		\[
		\|\Phi\|_{2\to q,\sigma}\leq M_q,
		\qquad \|\Phi-E_\sigma\|_{2\to2,\sigma}\leq r,
		\]
		then $\|\Phi\|_{2\to p,\sigma}\leq1$ and for every state $\rho$, 
        \[D(\Phi_*\rho\Vert\sigma)\leq\frac{2}{p} D(\rho\Vert\sigma) , \]
        where
		\[
		p=2+\frac{2(q-2)\log(1/r)}{q-2+q\log(M_q/r)}>2 .
		\]
	\end{enumerate}
\end{introtheorem}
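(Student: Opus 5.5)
For part (i) we go through the KMS-dual operator and a derivative at the density of the state. Put $X=\Gamma_\sigma^{-1}(\rho)$, where $\Gamma_\sigma(X)=\sigma^{1/2}X\sigma^{1/2}$. Then $\Phi_*\rho=\Gamma_\sigma(\hat\Phi^*(X))$, where $\hat\Phi^*$ is the KMS adjoint of $\Phi$.

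\emph{Step 1: duality.} The KMS adjoint satisfies
\[
\|\hat\Phi^*\|_{q'\to p',\sigma}=\|\Phi\|_{p\to q,\sigma}\le 1,
\]
where $p',q'$ are the conjugate exponents; note $q'<p'$. The relevant family of norms is the Rényi-type quantity $\|\Gamma_\sigma^{-1}\rho\|_{s,\sigma}$. These norms are the sandwiched Rényi divergences in exponentiated form:
\[
\log\|\Gamma_\sigma^{-1}\rho\|_{s,\sigma}=\tfrac{s-1}{s}\,\widetilde D_s(\rho\|\sigma).
\]
Applying the bound to $X_t=\Gamma_\sigma^{-1}(\sigma+t(\rho-\sigma))$ and letting $t\to0$ only produces $\chi^2$-type information. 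So instead we use the standard interpolation trick, following Hirche--Rouzé--França. Apply the bound to the powers $\Gamma_\sigma^{-1}(\rho^{\theta}\cdots)$, or equivalently use the Riesz--Thorin/Stein interpolation family $s\mapsto\|\cdot\|_{s,\sigma}$ between $s=1$ (where the channel is trace preserving, so the norm is $1$) and the pair $(q',p')$. Combining $\|\hat\Phi^*\|_{1\to1}=1$ with the $q'\to p'$ bound, complex interpolation on the noncommutative weighted $L_s$ scale gives
\[
\|\hat\Phi^*\|_{s_\theta\to t_\theta,\sigma}\le1,\qquad \frac1{s_\theta}=1-\theta+\frac{\theta}{q'},\quad \frac1{t_\theta}=1-\theta+\frac{\theta}{p'}.
\]
As $\theta\to0$ both exponents tend to $1$, and $\bigl(1-\tfrac1{t_\theta}\bigr)\big/\bigl(1-\tfrac1{s_\theta}\bigr)=\tfrac{1/p}{1/q}=q/p$.

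\emph{Step 2: pass to the limit.} Taking logarithms, $\|\hat\Phi^*X\|_{t_\theta,\sigma}\le\|X\|_{s_\theta,\sigma}$ with $X=\Gamma_\sigma^{-1}\rho$ reads
\[
\bigl(1-\tfrac1{t_\theta}\bigr)\widetilde D_{t_\theta}(\Phi_*\rho\|\sigma)\le\bigl(1-\tfrac1{s_\theta}\bigr)\widetilde D_{s_\theta}(\rho\|\sigma).
\]
Dividing by $\theta$ and letting $\theta\to0$ gives $\tfrac1p D(\Phi_*\rho\|\sigma)\le\tfrac1q D(\rho\|\sigma)$, since $\widetilde D_s\to D$ as $s\to1$. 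This is (i).

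For part (ii), the plan is first to prove $\|\Phi\|_{2\to p,\sigma}\le1$ and then apply (i) with the pair $(2,p)$. Write $X=c+Y$ with $c=\Tr(\sigma X)$, so that $Y$ is centered and $\Phi X=c+\Phi Y$. Since $\Phi-E_\sigma$ annihilates constants and $E_\sigma Y=0$,
\[
\|\Phi Y\|_{2,\sigma}=\|(\Phi-E_\sigma)Y\|_{2,\sigma}\le r\|Y\|_{2,\sigma}.
\]
Riesz--Thorin interpolation between this $2\to2$ bound and the bound $\|\Phi\|_{2\to q,\sigma}\le M_q$ (which also holds on centered elements) gives, on centered $Y$,
\[
\|\Phi Y\|_{s,\sigma}\le r^{1-\theta}M_q^{\theta}\|Y\|_{2,\sigma},\qquad \frac1s=\frac{1-\theta}{2}+\frac\theta q.
\]

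The main obstacle is the centering lemma. The claim to establish is that if $\|Z\|_{s,\sigma}$ is small enough relative to $\|Y\|_{2,\sigma}$, then $\|c+Z\|_{s,\sigma}\le(|c|^2+\|Y\|_{2,\sigma}^2)^{1/2}$. I would attempt this through a Gross/Bonami-type two-point inequality: $\|c+Z\|_s^2\le c^2+(s-1)\|Z\|_s^2$ for $s\ge2$, valid for centered $Z$ after reduction to positive $c$ and self-adjoint $X$. It suffices to obtain it in weighted noncommutative $L_s$ by uniform smoothness of order $2$ (Ball--Carlen--Lieb), though mean-zero with respect to $\sigma$ needs care in the symmetric embedding. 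This reduces the problem to requiring $(s-1)\,r^{2(1-\theta)}M_q^{2\theta}\le1$. Solving for the largest admissible $\theta$, with $r<1$ guaranteeing $\theta>0$, and simplifying the logarithms should produce the stated formula $p=2+\frac{2(q-2)\log(1/r)}{q-2+q\log(M_q/r)}$. Presumably an inequality such as $\log(s-1)\le s-2$ is used to linearize the condition, which would explain the shape of the expression. The non-self-adjoint case follows by the usual $2\times2$ matrix trick or by a polar-decomposition argument. Finally, (i) applied with $(2,p)$ yields $D(\Phi_*\rho\|\sigma)\le\frac2pD(\rho\|\sigma)$.
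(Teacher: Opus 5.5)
Your proposal is correct and follows the paper's proof essentially step for step. In part (i) you use the same interpolation of the KMS adjoint between $1\to1$ and $q'\to p'$, and your sandwiched R\'enyi limit $s\to1^+$ is the paper's norm--entropy derivative at $s=1$ in another form. In part (ii) you use the same centered interpolation together with the inequality $\|X\|_{p,\sigma}^2\le|\Tr(\sigma X)|^2+(p-1)\|X-E_\sigma X\|_{p,\sigma}^2$, which the paper cites from the literature rather than deriving from Ball--Carlen--Lieb; to land exactly on the stated $p$, combine $\log(p-1)\le p-2$ with $\theta_p=\frac{q(p-2)}{p(q-2)}\le\frac{q(p-2)}{2(q-2)}$ and $\log(M_q/r)\ge0$, which turns the condition $(p-1)r^{2(1-\theta_p)}M_q^{2\theta_p}\le1$ into $2\log r+(p-2)\bigl(1+\frac{q}{q-2}\log(M_q/r)\bigr)\le0$.
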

Part~(i) is established in Proposition~\ref{cor:2-4} by \cite[Proposition~5.4]{r11} and  identifying the relevant
KMS adjoint with the trace adjoint of the Petz map. Part~(ii), proved in
Theorem~\ref{prop:2-9}, is a weighted version of the centering method of
Wang~\cite[Theorem~1.6]{Wang2016}. In finite dimensions, a finite HB
bound is automatic; its quantitative value determines the HC exponent supplied
by the centering method. These statements require neither symmetry nor a semigroup
structure and provide the direct norm-to-entropy passage used below.

For a primitive quantum Markov semigroup (QMS) $P_t=e^{-t\cL}$, we write $\alpha_1$, $\alpha_2$, and $\alpha_H$ for the MLSI, $L_2$-Log-Sobolev Inequality, and
HC constants, respectively; see
Section~\ref{subsec:3-1} for detailed definitions. In continuous time setting, we have the following implications for entropy contraction.

\begin{introtheorem}[MLSI from hyperboundedness]\label{thm:intro-semigroup}
	Let $(P_t)$ be a finite-dimensional primitive KMS-symmetric QMS with faithful
	invariant state $\sigma$ and spectral gap $\lambda>0$. If some $2<q\le \infty,\ t_q>0,\ M_q\geq1$
	\[
	\|P_{t_q}\|_{2\to q,\sigma}\leq M_q
	\]
	then
	\[
	\alpha_1\geq\beta,\qquad \alpha_2\geq2\beta,\qquad
	\alpha_H\geq\beta,\qquad
	\beta=\frac{(1-2/q)\lambda}
	{2[\lambda t_q+\log M_q+1-2/q]}.
	\]
  Moreover, without symmetry
assumption,
\begin{equation}\label{eq:intro-mlsi}
\alpha_1\geq\frac{\alpha_H}{2}\geq
\frac{\lambda}{2+\log\|\sigma^{-1}\|_\infty}.
\end{equation}
\end{introtheorem}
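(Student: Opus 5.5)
Our plan is to derive the semigroup statement from the single-channel result, Theorem~\ref{thm:intro-channel}, applied to $\Phi=P_t$ for suitable $t$, and then differentiate at $t=0$ through the semigroup property.

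\textbf{Step 1: a hypercontractive exponent at a positive time.} KMS-symmetry gives $\|P_t-E_\sigma\|_{2\to2,\sigma}=e^{-\lambda t}$. For $t=t_q+s$ we have
\[
\|P_t\|_{2\to q,\sigma}\le\|P_{t_q}\|_{2\to q,\sigma}\le M_q,
\qquad \|P_t-E_\sigma\|_{2\to 2,\sigma}\le e^{-\lambda(t_q+s)}=:r .
\]
The first bound holds because $P_s$ is an $L_2$ contraction. Theorem~\ref{thm:intro-channel}(ii) then gives $\|P_t\|_{2\to p(t),\sigma}\le1$, with
\[
p(t)-2=\frac{2(q-2)\lambda t}{q-2+q(\log M_q+\lambda t)} .
\]
If $q=\infty$, we replace $q$ by a large finite value and pass to the limit.

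\textbf{Step 2: from a fixed time to the constant $\alpha_H$.} For a symmetric Markov semigroup, hypercontractivity at the single time $t$ with exponent $p(t)$ yields $\alpha_H\ge\log(p(t)-1)/(2t)$ by the usual Gross-type interpolation: iterate over $kt$, interpolate with Riesz--Thorin in the weighted $L_p$ scale, and use the equivalence of HC with $\alpha_2$ in the KMS-symmetric setting. We then choose $t$ to optimize. Using $\log(1+x)\ge x/(1+x)$,
\[
\frac{\log(p-1)}{2t}\ge\frac{(1-2/q)\lambda}{\log M_q+\lambda t+1-2/q}\cdot\frac{1}{1+\cdots}.
\]
Taking $t\to t_q$ gives a constant of the form $\beta$. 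Gross's theorem then gives $\alpha_2\ge 2\alpha_H$, which is consistent with the factor $2\beta$. To get $\alpha_1\ge\beta$, we apply Theorem~\ref{thm:intro-channel}(i) (or (ii)) to $P_t$: it gives $D(P_{t*}\rho\|\sigma)\le\frac{2}{p(t)}D(\rho\|\sigma)$. Iterating this and comparing with $e^{-\alpha_1 t}$ via the semigroup property gives $\alpha_1\ge\sup_t\log(p(t)/2)/t$. Since the ratio of $\log(p/2)$ to $\log(p-1)$ is bounded, this is bounded below by $\beta$.

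\textbf{Step 3: the unconditional bound \eqref{eq:intro-mlsi}.} The inequality $\alpha_1\ge\alpha_H/2$ without symmetry comes from differentiating Theorem~\ref{thm:intro-channel}(i) at $t=0$ along the curve $q(t)=1+(p-1)e^{2\alpha_H t}$, then letting $p\downarrow1$: the ratio $p/q(t)=1-2\alpha_H t+o(t)$ gives the contraction rate. For the second inequality we need a hyperbounded estimate with no input other than $\lambda$. Take $q=\infty$ and use $\|X\|_{\infty}\le\|\sigma^{-1}\|_\infty^{1/2}\|X\|_{2,\sigma}$, which gives $\|P_{t}\|_{2\to\infty,\sigma}\le\|\sigma^{-1}\|^{1/2}_\infty$ for all $t$. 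With $M_\infty=\|\sigma^{-1}\|_\infty^{1/2}$ and $t_q\to0$, the formula for $\beta$ gives $\lambda/(2\log M+2)=\lambda/(2+\log\|\sigma^{-1}\|_\infty)$.

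\textbf{Main obstacle.} The main obstacle is the non-symmetric case, where $\|P_t-E_\sigma\|_{2\to2}$ is no longer $e^{-\lambda t}$. There I would take $\lambda$ as the gap of the additive symmetrization and run Step 1 on $P_t^\dagger P_t$. This is where constants are likely to be lost, and careful bookkeeping will be needed.
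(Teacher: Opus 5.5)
\textbf{Verdict: there is a genuine gap.} The proposal never gets from estimates at positive times to the constants $\alpha_1,\alpha_2,\alpha_H$, which live at $t=0$; the second half (Step~3) also misses a factor $2$ and still depends on symmetry.

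\textbf{Part 1 (the symmetric bounds).} Your Step~1 is correct, but it yields $\|P_t\|_{2\to p(t),\sigma}\le 1$ only for $t\ge t_q$. By contrast, $\alpha_1$ and $\alpha_2$ are derivatives at $t=0$, and $\alpha_H$ needs HC at all small times. Neither bridge in Step~2 is valid.
\begin{itemize}
\item Iterating a fixed-time EC along $t,2t,\dots$ controls the entropy only on a lattice of times. It says nothing about the entropy production at $t=0$. That argument uses only the semigroup property, so Example~\ref{ex:3-zero} refutes it: there HC and EC hold after a delay, yet $\alpha_1=0$.
\item Your single-time rules are false even under KMS symmetry. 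At large $t$, Lemma~\ref{lem:2-8} together with $\|X\|_{r,\sigma}\le\|\sigma^{-1}\|_\infty^{1/2-1/r}\|X\|_{2,\sigma}$ gives HC $2\to r$, hence EC with coefficient $2/r$, for $\log(r-1)=2\lambda t-\log\|\sigma^{-1}\|_\infty$.
\item Fed into your rule $\alpha_H\ge\log(p-1)/(2t)$, this forces $\alpha_H\ge\lambda$, hence $\alpha_2\ge\lambda$, for every such semigroup. That fails for $P_t=e^{-t}\id+(1-e^{-t})E_\sigma$ on $M_2$ with non-tracial $\sigma$: on diagonal inputs it is a biased two-point chain, whose LSI constant is strictly below the gap $1$.
\item Fed into your rule $\alpha_1\ge\log(p/2)/t$, it forces $\alpha_1\ge 2\lambda$, contradicting $\alpha_1\le\lambda$.
\item Invoking an ``equivalence of HC with $\alpha_2$'' brings back the $L_p$-regularity the theorem is designed to avoid.
\item $\alpha_2\ge 2\alpha_H$ is not Gross's theorem; differentiation gives only $\alpha_H\le\alpha_2$.
\end{itemize}

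\textbf{The missing idea.} You need a \emph{local} HC curve $r(T)$ with $r(0)=2$ and $r'(0)=4\beta$, valid for all small $T$. Lemma~\ref{lem:3-local} then gives all three bounds at once. The paper builds this curve in three steps.
\begin{itemize}
\item Apply Stein--Weiss interpolation to $z\mapsto e^{-zt_q\cL}$; KMS self-adjointness gives $\|e^{-iyt_q\cL}\|_{2\to2}=1$. This yields $\|P_u\|_{2\to p(u)}\le M_q^{u/t_q}$ for $0\le u\le t_q$, with $p(u)=2/(1-(1-2/q)u/t_q)$.
\item Factor $P_T=P_{u_r}P_{T-u_r}$ with $p(u_r)=r$, and use the prefactor-one gap on $\ker E_\sigma$. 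This gives $\|P_T-E_\sigma\|_{2\to r}\le M_q^{u_r/t_q}e^{-\lambda(T-u_r)}$.
\item Lemma~\ref{lem:2-8} then gives HC for $T\ge F(r)$, where $F(2)=0$ and $F'(2)=1/(4\beta)$. So $r=F^{-1}$ has $r'(0)=4\beta$.
\end{itemize}
Stein-interpolating your single-time bound instead gives a strictly smaller initial slope. For $q=\infty$ and $t=t_q$ it is $2\lambda/(1+\log M_q+2\lambda t_q)$, against $4\beta=2\lambda/(1+\log M_q+\lambda t_q)$. So even the repaired route misses the stated $\beta$.

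\textbf{Step 3 (the chain without symmetry).}
\begin{itemize}
\item The first inequality is right in substance, but run it at input exponent $2$, not with $p\downarrow 1$. For $r(t)=1+e^{2\alpha t}$ one has $2/r(t)=1-\alpha t+o(t)$, so Proposition~\ref{cor:2-4} gives $\alpha_1\ge\alpha/2$; this is Lemma~\ref{lem:3-local} with $v=2\alpha$. In your version $p/q(t)=1-\frac{2\alpha(p-1)}{p}t+o(t)$, and this rate vanishes as $p\downarrow1$.
\item For the second inequality, inserting $M_\infty=\|\sigma^{-1}\|_\infty^{1/2}$ and $t_q\to0$ into $\beta$ does give $\alpha_1\ge\lambda/(2+\log\|\sigma^{-1}\|_\infty)$. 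But it gives only $\alpha_H\ge\lambda/(2+\log\|\sigma^{-1}\|_\infty)$, a factor $2$ short of the claimed middle term. It also rests on Part~1, hence on KMS symmetry.
\item No symmetrization via $P_t^\dagger P_t$ is needed. Take $\lambda=\lambda_{\mathrm{KMS}}$, which is the bottom of the spectrum of the additive symmetrization on $\ker E_\sigma$. Gr\"onwall then gives $\|P_t-E_\sigma\|_{2\to2}\le e^{-\lambda t}$ with prefactor one and no symmetry.
\item The norm comparison and Lemma~\ref{lem:2-8} then give HC $2\to r$ whenever $2\lambda t\ge\log(r-1)+(1-2/r)\log\|\sigma^{-1}\|_\infty$. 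For $r=1+e^{2\alpha t}$ this reads $(\lambda-\alpha)t\ge\frac12\log\|\sigma^{-1}\|_\infty\tanh(\alpha t)$. Since $\tanh x\le x$, it holds for all $t\ge0$ when $\alpha=2\lambda/(2+\log\|\sigma^{-1}\|_\infty)$.
\end{itemize}
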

We emphasis that no $L_p$-regularity assumed for the above Theorem~\ref{thm:3-endpoint}.
This estimate is proved by constructing a local HC
curve (small $t$) from the endpoint and the spectral gap. Its slope at $t=0$ gives the
MLSI and LSI bounds; extending the curve to all times gives the 
HC bound. The interpolation method and the existence of all-time HC and implication to LSI without
$L_p$-regularity are already present in~\cite{r20}. Here the centered criterion Lemma \ref{lem:2-8}
provides explicit constants, and the HC-to-EC implication gives MLSI directly.
The quantitative comparison with~\cite{r20} is given in
Section~\ref{subsec:3-bounds}.

For KMS symmetric semigroup, similar bound of Eq. \eqref{eq:intro-mlsi} were obtained for LSI constant $\alpha_2$, whose comparison to MLSI constant so far requires $L_p$-regularity. Here, we removes the technical assumption of $L_p$-regularity and obtain this bound for all KMS symmetric semigroups.
Related qualitative MLSI results for primitive KMS-symmetric semigroups also recently
appear in~\cite{LWW2026,GG2026}. 

Let $E_i:\cM\to\cN_i$, $i=1,2$, be conditional expectations preserving a common
faithful state $\sigma$, and let $E_{\cN}$ be the corresponding expectation onto
$\cN=\cN_1\cap\cN_2$. Write 
\[
\Phi_{\mathrm{av}}=\frac{E_1+E_2}{2},
\]
Approximate tensorization (AT) says that there is constant $c\ge 1$ such that
$$D(\rho||E_{\cN*}\rho )\le c\left(D(\rho\Vert E_{1*}\rho)+D(\rho\Vert E_{2*}\rho)\right)$$ holds for all states $\rho$. It is standard that AT with constant $c\geq1$ implies EC for
$\Phi_{\mathrm{av},*}$ with coefficient $1-1/(2c)$, by the chain rule and convexity~\cite{r4,r7}. Our third theorem proves a converse whose AT
constant depends only on the averaged EC coefficient.

\begin{introtheorem}[AT from averaged EC]\label{thm:intro-at}
Let $E_i:\cM\to\cN_i$, $i=1,2$, be conditional expectations preserving a common
faithful state $\sigma$, and let $E_{\cN}$ be the corresponding expectation onto
$\cN=\cN_1\cap\cN_2$.	If for some $0\leq\eta<1$ and every state $\rho$,
	\[
	D(\frac{E_{1*}\rho+E_{2*}\rho}{2}\|E_{\cN*}\rho )\leq\eta D(\rho||E_{\cN*}\rho),
	\]
	then
	\begin{equation}\label{eq:intro-at}
		D(\rho|| E_{\cN*}\rho )\leq\frac{3}{2(1-\eta)}
		\left( D(\rho\Vert E_{1*}\rho) +D(\rho\Vert E_{2*}\rho) \right)
	\end{equation}
\end{introtheorem}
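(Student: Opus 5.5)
Write $\rho_i=E_{i*}\rho$, $\rho_\cN=E_{\cN*}\rho$ and $\bar\rho=\tfrac12(\rho_1+\rho_2)$. The plan is to combine the chain rule for conditional expectations with a convexity (mixture) identity for relative entropy, and then to absorb the averaged term using the hypothesis. Since $E_\cN=E_\cN\circ E_i$ and $E_{i*}$ preserves $\sigma$-compatibility, we have the chain rule
\[
D(\rho\Vert\rho_\cN)=D(\rho\Vert\rho_i)+D(\rho_i\Vert\rho_\cN),\qquad i=1,2.
\]
Averaging over $i$ gives
\[
D(\rho\Vert\rho_\cN)=\tfrac12\bigl(D(\rho\Vert\rho_1)+D(\rho\Vert\rho_2)\bigr)+\tfrac12\bigl(D(\rho_1\Vert\rho_\cN)+D(\rho_2\Vert\rho_\cN)\bigr).
\]

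The second bracket has to be compared with $D(\bar\rho\Vert\rho_\cN)$, which is what the hypothesis controls. Donald's identity (the mixture identity for relative entropy) gives
\[
\tfrac12\bigl(D(\rho_1\Vert\rho_\cN)+D(\rho_2\Vert\rho_\cN)\bigr)=D(\bar\rho\Vert\rho_\cN)+\tfrac12\bigl(D(\rho_1\Vert\bar\rho)+D(\rho_2\Vert\bar\rho)\bigr).
\]
The correction term is the Holevo quantity of the ensemble $\{\rho_1,\rho_2\}$. It is bounded by $\log 2$ in general, but that bound is too crude here, so it must instead be controlled by the conditional entropies $D(\rho\Vert\rho_i)$.

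The main obstacle is exactly this step, and the aim is the bound
\[
\tfrac12\bigl(D(\rho_1\Vert\bar\rho)+D(\rho_2\Vert\bar\rho)\bigr)\le D(\rho\Vert\rho_1)+D(\rho\Vert\rho_2).
\]
I would derive it from a variational property of the mixture: $\bar\rho$ minimizes $\omega\mapsto\tfrac12\sum_i D(\rho_i\Vert\omega)$. Taking $\omega=\rho$ then gives
\[
\tfrac12\sum_i D(\rho_i\Vert\bar\rho)\le\tfrac12\sum_i D(\rho_i\Vert\rho).
\]
Next I need to compare $D(\rho_i\Vert\rho)$ with $D(\rho\Vert\rho_i)$. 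The reverse relative entropy is not directly comparable, so I would try the following: by Pythagoras for conditional expectations, $D(\rho_i\Vert\rho)$ relates to $D(\rho_i\Vert E_{i*}\rho)=0$ plus terms, and I expect at best a constant such as $2$. Alternatively, I would bound $D(\rho_1\Vert\bar\rho)\le D(\rho_1\Vert\tfrac12\rho_1+\tfrac12\rho)+\dots$ using joint convexity, with $\bar\rho$ compared to $\tfrac12(\rho_1+\rho)$.

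The cleaner route I would actually try is joint convexity through the data-processing inequality. We have $\rho_1-\bar\rho=\tfrac12(\rho_1-\rho_2)$, and $\rho_1=E_{1*}\rho$, $\bar\rho=\Phi_{\mathrm{av},*}\rho$. Applying $\Phi_{\mathrm{av},*}$ to the pair $(\rho_1,\rho)$ gives $D(\Phi_{\mathrm{av},*}\rho_1\Vert\bar\rho)\le D(\rho_1\Vert\rho)$, and joint convexity splits $\Phi_{\mathrm{av},*}\rho_1=\tfrac12\rho_1+\tfrac12E_{2*}\rho_1$. Targeting the constant that yields $3/2$ means a bound of the form
\[
\tfrac12\sum_i D(\rho_i\Vert\bar\rho)\le\tfrac12\bigl(D(\rho\Vert\rho_1)+D(\rho\Vert\rho_2)\bigr)\cdot 2,
\]
which produces the total coefficient $\tfrac12+1=\tfrac32$.

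Given that bound, the pieces assemble as follows. Combining the chain rule, the mixture identity, the hypothesis $D(\bar\rho\Vert\rho_\cN)\le\eta D(\rho\Vert\rho_\cN)$ and the Holevo bound gives
\[
(1-\eta)D(\rho\Vert\rho_\cN)\le\tfrac32\bigl(D(\rho\Vert\rho_1)+D(\rho\Vert\rho_2)\bigr),
\]
which is \eqref{eq:intro-at}. The step I expect to fail or need repair is the bound of the Holevo term by the conditional entropies. If it fails, my first fallback is to use the hypothesis iterated on $\bar\rho$ itself, with a slightly different constant.
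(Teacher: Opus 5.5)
Your skeleton matches the paper's. The chain rule and Donald's identity give
\[
D(\rho\Vert\rho_\cN)-D(\bar\rho\Vert\rho_\cN)=\tfrac12\bigl(D(\rho\Vert\rho_1)+D(\rho\Vert\rho_2)\bigr)+J(\rho_1,\rho_2),
\]
where $J$ is the quantum Jensen--Shannon divergence (your Holevo term). The bound $J(\rho_1,\rho_2)\le D(\rho\Vert\rho_1)+D(\rho\Vert\rho_2)$ that you target is exactly what yields the constant $3/2$. However, that bound is the whole difficulty, and you do not prove it, so there is a genuine gap.

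Both your variational route and your data-processing route end at the reverse entropies $D(\rho_i\Vert\rho)$. These cannot be bounded by $D(\rho\Vert\rho_i)$ with any constant. Take $\cN_1=\mathbb{C}1$, so $\rho_1=\sigma$, and take $\rho$ pure with $\dim\mathcal H\ge2$. Then $D(\rho\Vert\rho_1)\le\log\|\sigma^{-1}\|_\infty$, while $D(\rho_1\Vert\rho)=+\infty$. The data-processing route has a second problem: it controls $D(\tfrac12\rho_1+\tfrac12E_{2*}\rho_1\Vert\bar\rho)$ rather than $D(\rho_1\Vert\bar\rho)$. The fallback of iterating the hypothesis on $\bar\rho$ is not yet an argument.

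The missing idea is to route through $\rho$ using a \emph{symmetric} divergence that is dominated by relative entropy in either direction. Donald's identity with $R=B$, respectively $R=A$, gives $J(A,B)\le\tfrac12D(A\Vert B)$ and $J(A,B)\le\tfrac12D(B\Vert A)$. In particular $J(\rho_i,\rho)\le\tfrac12D(\rho\Vert\rho_i)$, which is the forward direction you actually control.

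To pass from $J(\rho_1,\rho_2)$ to $J(\rho_1,\rho)$ and $J(\rho,\rho_2)$ you need a triangle inequality, which $J$ itself does not satisfy. The paper uses Virosztek's theorem that $\sqrt{J}$ is a metric on states. Writing $a=D(\rho\Vert\rho_1)$ and $b=D(\rho\Vert\rho_2)$, this gives
\[
\sqrt{J(\rho_1,\rho_2)}\le\sqrt{J(\rho_1,\rho)}+\sqrt{J(\rho,\rho_2)}\le\sqrt{a/2}+\sqrt{b/2}.
\]
Hence $J(\rho_1,\rho_2)\le\tfrac12(\sqrt a+\sqrt b)^2=\tfrac12(a+b)+\sqrt{ab}\le a+b$. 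With this input, your final assembly goes through as written.
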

This theorem makes averaged EC and
AT quantitatively equivalent up to universal constants. The proof stays uses the quantum Jensen--Shannon  and the triangle inequality for the square root of
this divergence~\cite{r22}.  Approximate tensorization for noncommuting conditional expectations and its
complete versions were studied in~\cite{r4,r10}. Gao and
Rouz\'e~\cite[Corollary~5.4 and Remark~5.5]{r10} also obtain index-independent
bounds from mixing in completely positive order, an ultracontractive $L_1\to L_\infty$ type condition; and an index factor enters when
one converts $L_2$ estimates into that mixing condition. Our result starts
with an entropy contraction estimate and converts it to AT without an
additional index factor.  As an application of Theorem~\ref{thm:intro-channel},
when $\cN=\mathbb CI$, the hypercontractivity of the average channel $\rho\mapsto \frac{1}{2}(E_{1*}\rho+E_{2*}\rho)$ 
supplies the EC
hypothesis which leads to AT. 

Section~\ref{sec:2} develops the single channel estimates.
Section~\ref{sec:3} treats quantum Markov semigroups and discusses
entropy decay from time zero as well as delayed estimates.
Section~\ref{sec:4} proves the averaged EC to AT theorem.

\section{Entropy contraction via Hypercontractivity}\label{sec:2}
We discuss the entropy contraction from hypercontractivity via a dual form of  \cite[Proposition~5.4]{r11}. The hypercontractivity can be obtained from a hyperbounded endpoint and strict $L_2$ contraction on centered observables, using a weighted version of the method in \cite[Theorem~1.6]{Wang2016}.
The centered criterion will be used directly for semigroups in Section~\ref{sec:3}.

\subsection{Weighted norms and state-preserving channels}\label{subsec:2-1}
Throughout the paper, let $B(\mathcal{H})$ denote the algebra of linear operators on a finite-dimensional Hilbert space $\mathcal{H}$ and $\Tr$ be the standard matrix trace. We identify states with density matrices, and fix a faithful state $\sigma>0$, $\Tr\sigma=1$. All logarithms are natural.

For $1\le p<\infty$, recall the symmetric weighted noncommutative $L_p$ norms of \cite[equations~(10)--(12)]{r13} defined as
\begin{equation}\label{eq:2-1}
 \|X\|_{p,\sigma}=\bigl\|\sigma^{1/(2p)}X\sigma^{1/(2p)}\bigr\|_p,
 \qquad \|X\|_{\infty,\sigma}=\|X\|_\infty,
\end{equation}
where the $p$-norm on the right is the Schatten $p$-norm. We write $L_p(\sigma)$ for this normed space, and $\|T\|_{p\to q,\sigma}$ for the induced operator norm; the subscript $\sigma$ is omitted whenever there is no ambiguity. For $p=2$, $L_2(\sigma)$ is a Hilbert space equipped with the Kubo–Martin–Schwinger (KMS) inner product
\begin{equation}\label{eq:2-2}
 \langle X,Y\rangle_\sigma=\Tr(X^*\sigma^{1/2}Y\sigma^{1/2}),
 \qquad \Gamma_\sigma(X)=\sigma^{1/2}X\sigma^{1/2},
\end{equation}
where $\Gamma_\sigma$ is the weighting map.
For real $s$ we write $\Gamma_\sigma^s(X)=\sigma^{s/2}X\sigma^{s/2}$. For conjugate exponents $1/p+1/p'=1$, weighted H\"older duality gives
\begin{equation}\label{eq:2-3}
 |\langle X,Y\rangle_\sigma|\le\|X\|_{p,\sigma}\|Y\|_{p',\sigma},
 \qquad \|X\|_{p,\sigma}=\sup_{\|Y\|_{p',\sigma}\le1}|\langle Y,X\rangle_\sigma|.
\end{equation}
The norms are monotone: $\|X\|_{p,\sigma}\le\|X\|_{q,\sigma}$ for $p\le q$.  The spaces $L_p(\sigma)$ form a complex interpolation scale \cite[Theorem~4]{r5},
\begin{equation}\label{eq:2-4}
 \|T\|_{p_\theta\to q_\theta}\le
 \|T\|_{p_0\to q_0}^{1-\theta}\|T\|_{p_1\to q_1}^{\theta},
 \quad \frac1{p_\theta}=\frac{1-\theta}{p_0}+\frac\theta{p_1},
 \quad \frac1{q_\theta}=\frac{1-\theta}{q_0}+\frac\theta{q_1}.
\end{equation}
We will also use the analytic-family version of Riesz--Thorin interpolation in Section~\ref{sec:3}.

A quantum channel in the Heisenberg picture is a unital completely positive map $\Phi:B(\mathcal{H})\to B(\mathcal{H})$. Its trace dual $\Phi_*$ in the Schr\"odinger picture is a completely positive trace preserving (CPTP) map given by
\[
 \Tr(\Phi_*\rho\,X)=\Tr(\rho\,\Phi X).
\]
Throughout the paper, we always assume $\Phi$ is $\sigma$-preserving, i.e. $\sigma=\Phi_*\sigma$ is an invariant state. Every $\sigma$-preserving channel is contractive 
\[ \|\Phi(X)\|_{p, \sigma}\le \|X\|_{p, \sigma}\]
on $L_p(\sigma)$ for $1\le p\le\infty$. %Indeed, unital positivity gives the $p=\infty$ endpoint, and duality with the $\infty$-contractivity of $\Phi^\#$ gives the $p=1$ endpoint. Complex interpolation gives the remaining exponents.
The KMS adjoint with respect to $L_2(\sigma)$ is
\begin{equation}\label{eq:2-5}
 \Phi^\#=\Gamma_\sigma^{-1}\Phi_*\Gamma_\sigma.
\end{equation}
This is exactly the Petz recovery map of $\Phi_*$ with respect to the invariant state $\sigma$ in the Heisenberg picture. In particular,  $\Phi^\#$ is unital and completely positive and satisfies $\sigma\circ \Phi^\#=\sigma $. We call $\Phi$ KMS-symmetric if $\Phi^\#=\Phi$.

\iffalse

For completely positive maps, the positive-input reduction \cite[Theorem~1 and the proof of Lemma~2]{r23} states that induced Schatten $p\to q$ norms can be computed on positive inputs. For weighted norms this applies to the completely positive map $\Gamma_\sigma^{1/q}T\Gamma_\sigma^{-1/p}$. Consequently, a positive-observable HC estimate for a completely positive map gives the full operator-norm estimate used here. For direct-sum algebras, extend the sandwiched map to a full matrix algebra by composing with the block pinching. Since that pinching is a Schatten contraction and commutes with the sandwich weights, the induced cross-norm is unchanged. The same pinching extends a $\sigma$-preserving channel to the ambient full matrix algebra without changing its weighted norm bound.
\fi 
\subsection{From norm estimates to entropy contraction}\label{subsec:2-2}

For two states $\rho,\omega$, their Umegaki relative entropy \cite{r21} is
\[
 D(\rho\|\omega)=\Tr\rho(\log\rho-\log\omega)
\]
if $\supp\rho\subseteq\supp\omega$, and is $+\infty$ otherwise. 
We recall the three functional inequalities of a state preserving quantum channel.

\begin{definition}[Hypercontractivity, hyperboundedness and entropy contraction]\label{def:2-2}
Let $\Phi: B(\mathcal{H})\to B(\mathcal{H})$ be a unital completely positive map and assume $\Phi$ preserves a fixed faithful state $\sigma=\Phi_*\sigma$. For $1<p<q<\infty$, we say
\begin{enumerate}
\item[i)] $\Phi$ is \emph{hypercontractive} (HC) at $(p,q)$ if $$\|\Phi\|_{p\to q,\sigma}\le1. $$
\item[ii)] $\Phi$ is \emph{hyperbounded} (HB) at $(p,q)$ if for some $1\le M<\infty$,
$$\|\Phi\|_{p\to q,\sigma}\le M$$ 
\item[iii)] $\Phi$ satisfies \emph{(relative) entropy contraction} with respect to $\sigma$ if for some $0<\eta<1$,
\begin{align}\label{eq:EC}
D(\Phi_*\rho\|\sigma)\le \eta D(\rho\|\sigma),\qquad \forall  \  \text{ state }\rho. \end{align}
\end{enumerate}
\end{definition}
\iffalse
 Following \cite[equation~(1.3)]{r11}, the fixed-state relative-entropy contraction coefficient, also called the strong data-processing coefficient, is
\begin{equation}\label{eq:2-12}
 \eta_{\mathrm{Re}}(\Phi_*;\sigma)=\sup_{\rho\ne\sigma}
 \frac{D(\Phi_*\rho\|\sigma)}{D(\rho\|\sigma)}.
\end{equation}
Thus EC with coefficient $\eta<1$ asserts that this supremum is at most $\eta$.
\fi
In finite dimensions, a hyperbounded estimate always holds and the information lies in the value of $M$ and the dependence on the parameters $(p,q)$. Hypercontractivity is the case $M=1$, which can hold only if $\Phi$ is primitive, i.e. $\Phi^k$ converges to $E_\sigma$. In particular, $\sigma$ is the unique invariant state and the fixed-point space $\mathcal{N}:=\{ X \ |\  \Phi(X)=X\}=\mathbb{C}1$ is trivial. By the data processing inequality, the relative entropy is always monotone non-increasing 
\[ D(\Phi_*\rho\|\sigma)\le D(\rho\|\sigma)\ .\]
The entropy contraction \eqref{eq:EC} asserts a uniform contraction factor strictly smaller than one and is also called a strong data processing inequality (SDPI).

 Since $\Phi_*\sigma=\sigma$, the Petz map of $\Phi_*$ at $\sigma$ satisfies
\begin{equation}\label{eq:2-14}
 R_{\sigma,\Phi_*}=\Gamma_\sigma\Phi\Gamma_\sigma^{-1},
 \qquad R_{\sigma,\Phi_*}^*=\Gamma_\sigma^{-1}\Phi_*\Gamma_\sigma=\Phi^\#.
\end{equation}
Here $*$ denotes the trace adjoint. Weighted H\"older duality \eqref{eq:2-3} gives
\begin{equation}\label{eq:2-15}
 \|\Phi\|_{p\to q,\sigma}=\|\Phi^\#\|_{q'\to p',\sigma},
 \qquad \|\Phi^\#\|_{1\to1,\sigma}=1.
\end{equation}
Applying \cite[Proposition~5.4]{r11} with the KMS duality above, we obtain the quantum analog of \cite[Theorem~1]{r18} by Salez for classical Markov maps. No symmetry (reversibility) is required, as in the classical case \cite{r18}.
\iffalse
The next proposition is \cite[Proposition~5.4]{r11} in the weighted Heisenberg picture. Its KMS-symmetric conclusion follows by applying the same result to the dual exponents. We include the interpolation proof to fix the coefficients and conventions.
\fi

\begin{proposition}[Entropy contraction from Hypercontractivity]\label{cor:2-4}
Let $1<p\le q<\infty$. Suppose $\Phi$ is a $\sigma$-preserving channel satisfying
\begin{equation}\label{eq:2-16}
 \|\Phi\|_{p\to q,\sigma}\le 1,
\end{equation}
then for every state $\rho$,
\begin{equation}\label{eq:2-17}
 D(\Phi_*\rho\|\sigma)\le\frac pqD(\rho\|\sigma).
\end{equation}
\iffalse
In particular, if $M=1$ and $p<q$, then
\begin{equation}\label{eq:2-18}
 \eta_{\mathrm{Re}}(\Phi_*;\sigma)\le\frac pq<1.
\end{equation}
If, in addition, $\Phi^\#=\Phi$, then
\begin{equation}\label{eq:2-19}
 \begin{split}
 D(\Phi_*\rho\|\sigma)\le\min\Bigl\{&\frac pqD(\rho\|\sigma)+p\log M,\\
 &\frac{q'}{p'}D(\rho\|\sigma)+q'\log M\Bigr\}.
 \end{split}
\end{equation}
Thus in the KMS-symmetric hypercontractive case,
\begin{equation}\label{eq:2-20}
 \eta_{\mathrm{Re}}(\Phi_*;\sigma)\le\min\left\{\frac pq,\frac{q'}{p'}\right\}.
\end{equation}
\fi
\end{proposition}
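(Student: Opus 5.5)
The plan is to derive \eqref{eq:2-17} from a Rényi-type divergence bound obtained by duality, and then pass to the limit. First, by \eqref{eq:2-15}, the hypothesis \eqref{eq:2-16} is equivalent to $\|\Phi^\#\|_{q'\to p',\sigma}\le 1$, where $1<q'\le p'<\infty$. For a state $\rho$, write $X_\rho=\Gamma_\sigma^{-1}(\rho)=\sigma^{-1/2}\rho\sigma^{-1/2}$, so that $\rho=\Gamma_\sigma(X_\rho)$. By \eqref{eq:2-5}, $X_{\Phi_*\rho}=\Gamma_\sigma^{-1}\Phi_*\Gamma_\sigma X_\rho=\Phi^\# X_\rho$. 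The key identity I will use is
\[
\|X_\rho\|_{s,\sigma}^{s'}=\Big\|\sigma^{-\frac1{2s'}}\rho\,\sigma^{-\frac1{2s'}}\Big\|_s^{s'},
\]
which follows because $\frac1{2s}-\frac12=-\frac1{2s'}$. Up to the normalization $\frac{s}{s-1}\log$, this is $\exp$ of $(s-1)\tilde D_s(\rho\|\sigma)$, the sandwiched Rényi divergence. This is exactly how \cite[Proposition~5.4]{r11} is set up, with the Petz map $R_{\sigma,\Phi_*}^*=\Phi^\#$ from \eqref{eq:2-14}.

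The heart of the argument is an interpolation that converts the single contraction $\|\Phi^\#\|_{q'\to p'}\le1$ into a family of contractions along a curve of exponents approaching $(1,1)$. I will interpolate with the trivial bound $\|\Phi^\#\|_{1\to1,\sigma}=1$ from \eqref{eq:2-15}, using \eqref{eq:2-4}. For $\theta\in(0,1]$ this gives $\|\Phi^\#\|_{a_\theta\to b_\theta}\le1$, where
\[
\frac1{a_\theta}=1-\frac{\theta}{q},\qquad \frac1{b_\theta}=1-\frac{\theta}{p}.
\]
Here I use that $1/q'=1-1/q$ and $1/p'=1-1/p$. Apply this to $X_\rho$ and use $\|X_\rho\|_{1,\sigma}=1=\|\Phi^\#X_\rho\|_{1,\sigma}$. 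Taking logarithms gives
\[
\log\|\Phi^\# X_\rho\|_{b_\theta,\sigma}\le \log\|X_\rho\|_{a_\theta,\sigma}.
\]
Both sides vanish at $\theta=0$. Dividing by $\theta$ and letting $\theta\to0^+$ compares the derivatives at $s=1$.

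For that limit I need the standard derivative formula $\frac{d}{d(1/s')}\log\|X_\rho\|_{s,\sigma}^{\,}$: as $s\downarrow1$,
\[
\log\|X_\rho\|_{s,\sigma}=\frac{1}{s'}\,\tilde D_s(\rho\|\sigma)=\Big(1-\frac1s\Big)D(\rho\|\sigma)+o\Big(1-\frac1s\Big),
\]
by continuity of the sandwiched Rényi divergence at $s=1$. Along the curve, $1-1/a_\theta=\theta/q$ and $1-1/b_\theta=\theta/p$. The limit therefore gives
\[
\frac{\theta}{p}D(\Phi_*\rho\|\sigma)\le \frac{\theta}{q}D(\rho\|\sigma)+o(\theta),
\]
which is \eqref{eq:2-17} after dividing by $\theta$ and letting $\theta\to0$.

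I expect the main obstacle to be justifying this first-order expansion, especially when $\rho$ is not faithful. Finite dimension and faithfulness of $\sigma$ make $s\mapsto\|\sigma^{-1/(2s')}\rho\sigma^{-1/(2s')}\|_s$ real-analytic near $s=1$, so the derivative can be computed by differentiating $\Tr\big(\sigma^{-\frac{1}{2s'}}\rho\sigma^{-\frac{1}{2s'}}\big)^s$. The result is the known limit $\tilde D_s\to D$, which is exactly what \cite[Proposition~5.4]{r11} packages. Alternatively, I will cite that proposition directly: its Petz-map formulation $\|R^*_{\sigma,\Phi_*}\|_{q'\to p'}\le1\Rightarrow$ SDPI with constant $p/q$ applies verbatim after the identification \eqref{eq:2-14}. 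The case $p=q$ is just data processing.
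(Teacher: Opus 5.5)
Your proposal is correct and matches the paper's proof step for step. You dualize to $\|\Phi^\#\|_{q'\to p',\sigma}\le1$, interpolate with $\|\Phi^\#\|_{1\to1,\sigma}=1$ along the same curve $1/a_\theta=1-\theta/q$, $1/b_\theta=1-\theta/p$, apply the bound to $X_\rho=\Gamma_\sigma^{-1}\rho$, and differentiate at $\theta=0$. The only difference is cosmetic: you get the first-order expansion from $\log\|X_\rho\|_{s,\sigma}=(1-1/s)\tilde D_s(\rho\|\sigma)$ and continuity of the sandwiched R\'enyi divergence at $s=1$, which covers non-faithful $\rho$ directly, whereas the paper uses \cite[Lemma~5.1]{r11} for faithful $\rho$ and then approximates by $(1-\varepsilon)\rho+\varepsilon\sigma$.
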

\begin{proof}
For $0\le t\le1$, set
\[
 a_t=(1-t/q)^{-1},\qquad b_t=(1-t/p)^{-1}.
\]
By \eqref{eq:2-15} and \eqref{eq:2-16}, complex interpolation yields
$$\|\Phi^{\#}\|_{a_t\to b_t,\sigma}\le 1. $$
For a faithful density $\rho$, take $X=\Gamma_\sigma^{-1}\rho$.
Then $\Phi^\# X=\Gamma_\sigma^{-1}\Phi_*\rho$, and both have unit weighted $L_1$ norm.
The norm--entropy derivative \cite[Lemma~5.1]{r11} is
\[
 \left.\frac{d}{ds}\log\|\Gamma_\sigma^{-1}\rho\|_{s,\sigma}\right|_{s=1}
 =D(\rho\Vert\sigma).
\]
Taking logarithms and differentiating the interpolated estimate at $t=0$ gives \eqref{eq:2-17}
\[
 \frac1pD(\Phi_*\rho\Vert\sigma)
 \le\frac1qD(\rho\Vert\sigma).
\]
 The result for a general state $\rho$ follows by faithful approximation $(1-\varepsilon)\rho+\varepsilon\sigma$.
\end{proof}
Hypercontractivity also implies primitivity. Recall that a channel is primitive if its iterates $\Phi^k$ converge to the scalar expectation $E_\sigma(X)=\Tr(\sigma X)I$. If $\|\Phi\|_{p\to q,\sigma}\le1$ with $p<q$, iteration gives
\begin{equation}\label{eq:2-21}
 D(\Phi_*^k\rho\|\sigma)\le\left(\frac pq\right)^kD(\rho\|\sigma)\to 0, 
 \qquad \text{ as } k\to \infty.
\end{equation}
By Quantum Pinsker's inequality \cite[Section~I]{r13}, this yields trace norm convergence
$$\|\Phi_*^k\rho-\sigma\|_1^2\le2\left(\frac pq\right)^k D(\rho\|\sigma).$$ 

When the fixed-point algebra $\cN$ is nontrivial, relative hyperboundedness was introduced in \cite{r3} using amalgamated norms relative to $\cN$. If  $E$ is the fixed-point conditional expectation and $\Phi E=E\Phi=E$, the corresponding entropy contraction is formulated using $D(\rho\|E_*\rho)$, the decoherence-free relative entropy introduced in \cite[Definition~3.3]{r2}, whose reference state $E_*\rho$ depends on the input $\rho$. However, for the amalgamated-norm formulation of \cite{r3}, a nontrivially decohering QMS on $B(\mathcal H)$ cannot satisfy a defect-free HC curve with positive initial slope when $\mathbb CI\subsetneq\cN\subsetneq B(\mathcal H)$; see \cite[Theorems~2.4(i) and~5.1]{r3}.

\subsection{From hyperboundedness to hypercontractivity}\label{subsec:2-3}
It follows by an argument similar to that of Proposition \ref{cor:2-4} that hyperboundedness implies entropy contraction with a defect
\[ \|\Phi\|_{p\to q,\sigma}\le M \qquad \Longrightarrow \qquad  D(\Phi_*\rho\|\sigma)\le\frac pqD(\rho\|\sigma)+p\log M \ .  \]
We now use a strict $L_2$ coercivity to remove the defect term. The key ingredient is a centering argument as an analog of the Rothaus Lemma for $L_p$-norms. We separate the centering argument, which will be reused in Section~\ref{sec:3}.

Set the expectation map
\begin{equation}\label{eq:2-10}
 E_\sigma(X)=\Tr(\sigma X)I.
\end{equation}
This gives the KMS orthogonal decomposition $L_2(\sigma)=\mathbb CI\oplus \ker E_\sigma$, with 
$$\|aI+X_0\|_{2,\sigma}^2=|a|^2+\|X_0\|_{2,\sigma}^2$$ for $X_0\in \ker E_\sigma$.
Every $\sigma$-preserving channel satisfies $\Phi E_\sigma=E_\sigma\Phi=E_\sigma$, and therefore
\begin{equation}\label{eq:2-11}
 \|\Phi-E_\sigma\|_{2\to2,\sigma}=\|\Phi|_{\ker E_\sigma}\|_{2\to2,\sigma}.
\end{equation}

\begin{lemma}\label{lem:2-8}
Let $\Phi$ be a $\sigma$-preserving channel and $2\le p<\infty$. If
\begin{equation}\label{eq:2-23}
 \|\Phi-E_\sigma\|_{2\to p,\sigma}\le(p-1)^{-1/2},
\end{equation}
then $\|\Phi\|_{2\to p,\sigma}\le1$.
\end{lemma}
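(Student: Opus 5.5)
The plan is to reduce the statement to a scalar-plus-centered inequality and then apply the optimal $2$-uniform smoothness of noncommutative $L_p$ spaces for $p\ge2$, in the form of Ricard and Xu, which is relative to a conditional expectation.

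First, decompose an arbitrary $X\in L_2(\sigma)$ orthogonally as $X=aI+X_0$ with $a=\Tr(\sigma X)$ and $X_0\in\ker E_\sigma$. Then $\|X\|_{2,\sigma}^2=|a|^2+\|X_0\|_{2,\sigma}^2$. Since $\Phi E_\sigma=E_\sigma\Phi=E_\sigma$, we have $\Phi X=aI+Y$ with $Y=(\Phi-E_\sigma)X_0=\Phi X_0$, and $E_\sigma Y=0$. Hypothesis \eqref{eq:2-23} gives
\[
(p-1)\|Y\|_{p,\sigma}^2\le\|X_0\|_{2,\sigma}^2 .
\]
It therefore suffices to prove the centered smoothness inequality
\[
\|aI+Y\|_{p,\sigma}^2\le |a|^2+(p-1)\|Y\|_{p,\sigma}^2,\qquad E_\sigma Y=0,\ p\ge2. \tag{$\ast$}
\]
Combined with the displayed bound, $(\ast)$ yields $\|\Phi X\|_{p,\sigma}^2\le|a|^2+\|X_0\|_{2,\sigma}^2=\|X\|_{2,\sigma}^2$, which is the claim.

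To prove $(\ast)$, identify $L_p(\sigma)$ isometrically with the Haagerup $L_p$ space of $(B(\mathcal H),\sigma)$ via $X\mapsto\sigma^{1/(2p)}X\sigma^{1/(2p)}$. Under this identification the scalar part $aI$ corresponds to $a\sigma^{1/p}$, which is the image of the subalgebra $\mathbb CI$. Moreover, $E_\sigma$ is the $\sigma$-preserving conditional expectation onto $\mathbb CI$, and its $L_p$ extension annihilates the image of $Y$. In this setting the Ricard--Xu inequality states that, for $p\ge2$, $x\in L_p(\mathcal N)$ and $\mathcal E_{\mathcal N}y=0$,
\[
\|x+y\|_p^2\le\|x\|_p^2+(p-1)\|y\|_p^2 .
\]
Applying it with $\mathcal N=\mathbb CI$ and using $\|aI\|_{p,\sigma}=|a|$ gives $(\ast)$.

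The main obstacle is making this step self-contained, or at least checking carefully that the weighted symmetric norms \eqref{eq:2-1} match the Haagerup/Kosaki norms for which the inequality is proved. If a direct argument is preferred, I would first remove the phase of $a$, replacing $Y$ by $e^{-i\arg a}Y$, so that we may assume $a=1$. I would then study $g(t)=\|I+tY\|_{p,\sigma}^2$ on $[0,1]$. Here $g(0)=1$, and $g'(0)=2\,\mathrm{Re}\Tr(\sigma Y)=0$ by centering. The target is the bound $g(1)\le1+(p-1)\|Y\|_{p,\sigma}^2$. The naive second-derivative estimate is not uniform in $t$, and this is exactly where the martingale smoothness argument of Ricard--Xu is needed. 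So I would invoke that result rather than reprove it, and only check the transfer to the weighted picture.
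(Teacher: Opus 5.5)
Your proposal is correct and follows the paper's own proof. It uses the same splitting $X=aI+X_0$ with $\Phi X=aI+\Phi X_0$ and $E_\sigma\Phi X_0=0$, and the same centered $p\ge2$ smoothness inequality \eqref{eq:2-22}, which the paper imports in reversed (dual) form from Lemma~2.9 of its cited reference, just as you invoke the Ricard--Xu inequality for the $\sigma$-preserving expectation onto $\mathbb{C}I$ via the symmetric embedding into Haagerup $L_p$.
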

\begin{proof}
Lemma~2.9 of \cite{r29}, in its reversed form for $p>2$, gives
\begin{equation}\label{eq:2-22}
 \|X\|_{p,\sigma}^2\le|\Tr(\sigma X)|^2
 +(p-1)\|X-E_\sigma X\|_{p,\sigma}^2.
\end{equation}
Writing $X=aI+X_0$, with $X_0\in \ker E_\sigma$, we obtain
\begin{align*}
 &\|\Phi X\|_{p,\sigma}^2
 \le|a|^2+(p-1)\|\Phi X_0\|_{p,\sigma}^2
 \le|a|^2+\|X_0\|_{2,\sigma}^2
 =\|X\|_{2,\sigma}^2 . \qedhere
\end{align*}
\end{proof}

The next theorem is a weighted version of \cite[Theorem 1.6]{Wang2016}, upgrading a hyperbounded estimate to hypercontractivity via centered $L_2$-contraction, from which we obtain entropy contraction.

\begin{theorem}\label{prop:2-9}
Let $\Phi$ be a $\sigma$-preserving channel. Suppose that \begin{enumerate}
\item[i)] for some $2<q<\infty$ and $M_q\ge1$,
\begin{equation}\label{eq:2-24}
 \|\Phi\|_{2\to q,\sigma}\le M_q,
\end{equation}
\item[ii)] for some $0<r<1$,
 \begin{align} \|\Phi-E_\sigma\|_{2\to2,\sigma}\le r, \label{eq:gap}\end{align}
\end{enumerate}
Then 
\begin{align}\label{eq:2-26}
 \|\Phi\|_{2\to p,\sigma}\le1,
\end{align} 
and for every state $\rho$,
\begin{align}\label{eq:2-26x}
\qquad D(\Phi_*\rho\|\sigma)\le\frac2pD(\rho\|\sigma)
\end{align}
where
\begin{equation}\label{eq:2-27}
 p=2+\frac{2(q-2)\log(1/r)}{q-2+q\log(M_q/r)}.
\end{equation}
\end{theorem}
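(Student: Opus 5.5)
Apply Lemma~\ref{lem:2-8} to the centered map $\Phi-E_\sigma$, using complex interpolation \eqref{eq:2-4} between the endpoint bound \eqref{eq:2-24} and the gap bound \eqref{eq:gap}. Once $\|\Phi\|_{2\to p,\sigma}\le1$ is known, the entropy estimate \eqref{eq:2-26x} is Proposition~\ref{cor:2-4} with the pair $(2,p)$.

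\textbf{Step 1 (endpoint bound for the centered map).} Since $\Phi E_\sigma=E_\sigma$, we have $\Phi-E_\sigma=\Phi(\id-E_\sigma)$. The map $\id-E_\sigma$ is the KMS orthogonal projection onto $\ker E_\sigma$, so it has norm at most $1$ on $L_2(\sigma)$. Hence
\[
\|\Phi-E_\sigma\|_{2\to q,\sigma}\le M_q .
\]

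\textbf{Step 2 (interpolation).} Interpolate the operator $T=\Phi-E_\sigma$ between $(2,2)$ with bound $r$ and $(2,q)$ with bound $M_q$. For $\theta\in[0,1]$ define $p_\theta$ by $1/p_\theta=(1-\theta)/2+\theta/q$. Then \eqref{eq:2-4} gives
\[
\|\Phi-E_\sigma\|_{2\to p_\theta,\sigma}\le r^{1-\theta}M_q^{\theta}.
\]

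\textbf{Step 3 (choice of $\theta$).} By Lemma~\ref{lem:2-8} it suffices to have $r^{1-\theta}M_q^\theta\le(p_\theta-1)^{-1/2}$, i.e.
\[
(1-\theta)\log(1/r)-\theta\log M_q\ \ge\ \tfrac12\log(p_\theta-1).
\]
Write $u=1-2/p_\theta=\theta(1-2/q)$. Then $p_\theta-1=(1+u)/(1-u)$, so the right-hand side involves $\operatorname{artanh}(u)$, which is not elementary to invert in closed form. I would instead bound it by a linear function. One option uses $p-1\le e^{p-2}$, giving $\tfrac12\log(p-1)\le\tfrac12(p-2)$. With $\theta=u/(1-2/q)=uq/(q-2)$, the sufficient condition becomes linear in $\theta$ once $p-2$ is also controlled linearly in $\theta$ (using $p_\theta-2=2u/(1-u)$ and a bound of the form $u\le$ const). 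Solving the resulting linear inequality for the largest admissible $\theta$ and converting back to $p$ should produce the stated formula
\[
p=2+\frac{2(q-2)\log(1/r)}{q-2+q\log(M_q/r)}.
\]
The shape of this formula suggests the authors used $\log(p-1)\le p-2$ together with the exact relation between $p$ and $\theta$. I would reverse-engineer it by setting $x=p-2$, so that $\theta=\frac{xq}{(q-2)(x+2)}$, and checking that
\[
\log(1/r)-\theta\log(M_q/r)\ \ge\ \tfrac{x}{2}
\]
holds at the stated value. Multiplying through by $x+2$ makes this a quadratic inequality in $x$. The stated $x$ should satisfy it because $x\le 2$ (note that $\log(1/r)\le\log(M_q/r)$ since $M_q\ge1$), which lets the quadratic term be absorbed.

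\textbf{Main obstacle.} The bookkeeping in Step 3 is where I expect difficulty: choosing the right elementary bound on $\log(p_\theta-1)$ so that the admissible exponent comes out exactly as in \eqref{eq:2-27}. Two further points need checking. First, $p>2$, which holds since $r<1$. Second, $p\le q$, so that $\theta\le1$ and the interpolation is legitimate; this follows because the stated $p-2$ is at most $2(q-2)\log(1/r)/(q\log(1/r))<q-2$.
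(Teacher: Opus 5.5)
Your plan is the paper's proof. You interpolate the centered part between the gap bound $r$ on $L_2$ and the endpoint $M_q$ on $L_q$; the paper does this pointwise for each centered $X$ rather than for the operator $\Phi-E_\sigma$, which gives the same estimate. You then insert the result into Lemma~\ref{lem:2-8}, linearize with $\log(p-1)\le p-2$, and conclude with Proposition~\ref{cor:2-4} at the pair $(2,p)$. Your checks that $2<p<q$ are also correct.

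The one flaw is the justification at the end of Step~3, which you also left hedged. Put $L=\log(1/r)$ and $K=\log(M_q/r)$, and let $x_*=p-2$ be the stated value, so that $x_*\bigl(1+\frac{q}{q-2}K\bigr)=2L$. Your quadratic condition is $2L+xL-x-\frac{q}{q-2}Kx-\frac{x^2}{2}\ge0$. At $x=x_*$ the terms $2L-x_*\bigl(1+\frac{q}{q-2}K\bigr)$ cancel, and what remains is $x_*(L-x_*/2)$.

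Absorbing the quadratic term via $x\le2$ (i.e.\ $x^2/2\le x$), as you propose, leaves only the lower bound $x_*(L-1)$. That bound is negative when $r>1/e$, so this reason does not suffice. The correct reason is $x_*\le 2L$, which is immediate from the formula since $1+\frac{q}{q-2}K\ge1$.

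The paper avoids the quadratic altogether. Since $p\ge2$, one has $\theta_p=\frac{q(p-2)}{p(q-2)}\le\frac{q(p-2)}{2(q-2)}$. Hence the logarithm of $(p-1)r^{2(1-\theta_p)}M_q^{2\theta_p}$ is at most $2\log r+(p-2)\bigl(1+\frac{q}{q-2}\log(M_q/r)\bigr)$. Requiring this linear expression to be $\le0$ is exactly \eqref{eq:2-27}.
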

\begin{proof}
For $0<r<1$ and $M_q\ge1$, the definition of $p$ gives
$0<p-2<2(q-2)/q<q-2$. For $2<p\le q$, put $\theta_p=(1/2-1/p)/(1/2-1/q)$.
Interpolating the output norms of each $\Phi X$ with $E_\sigma(X)=0$ gives
\[
 \|\Phi X\|_{p,\sigma}
 \le\|\Phi X\|_{2,\sigma}^{1-\theta_p}\|\Phi X\|_{q,\sigma}^{\theta_p}
 \le r^{1-\theta_p}M_q^{\theta_p}\|X\|_{2,\sigma}.
\]
Thus Lemma~\ref{lem:2-8} gives hypercontractivity whenever
\begin{equation}\label{eq:2-25}
 (p-1)r^{2(1-\theta_p)}M_q^{2\theta_p}\le1.
\end{equation}
By the inequalities $\log(p-1)\le p-2$ and
$\theta_p\le q(p-2)/(2(q-2))$, the logarithm of the left-hand side is bounded by
\[
  \log \left((p-1)r^{2(1-\theta_p)}M_q^{2\theta_p}\right)\le   2\log r+\left(1+\frac q{q-2}\log(M_q/r)\right)(p-2)\le0,
\]
where the definition of $p$ in \eqref{eq:2-27} is chosen to ensure the last inequality.
This proves hypercontractivity at $(2,p)$ which by  
Proposition~\ref{cor:2-4} gives entropy contraction \eqref{eq:2-26x}.
\end{proof}
\iffalse
For $0<r<1$, the test \eqref{eq:2-25} gives a sharper sufficient range than the explicit bound above.
Write $M_q=e^{d(1-2/q)}$, $d\ge0$, and $\gamma=-\log r$.
There is a unique $p_*\in(2,q)$ such that
\begin{equation}\label{eq:2-28}
 \frac{2\gamma(q-p_*)}{q-2}=d(p_*-2)+\frac{p_*}{2}\log(p_*-1).
\end{equation}
The left side strictly decreases from $2\gamma$ to zero and the right side strictly increases from zero.
Multiplying the logarithm of \eqref{eq:2-25} by $p/2$ shows that precisely $2<p\le p_*$ satisfy the test; in particular, $p_0\le p_*$.
This is a sufficient range, not a characterization of the channel's optimal HC exponent.
\fi 

\begin{remark}{\rm a) For a primitive channel $\Phi$ that is KMS-symmetric with respect to $\sigma$, the second condition ii) \eqref{eq:gap} is always satisfied. Indeed, under KMS symmetry, $\Phi$ is a self-adjoint operator on $L_2(\sigma)$ commuting with $E_\sigma$, and hence the norm of $\Phi-E_\sigma$ equals its spectral radius, which is strictly less than 1 by primitivity.

b)
The above qualitative implication from strict centered $L_2$ contraction to a nontrivial HC exponent is known in the classical setting \cite[Theorem~3(a),~(c), p.~928]{r1}. 
In finite-dimensional noncommutative algebras with a tracial state, Wang \cite[Theorem~1.6]{Wang2016} proves the corresponding $s\to2$ characterization for $1<s<2$. Our argument above is a weighted version in the dual range using the same $p$-convexity inequality. }
\end{remark}

\section{Quantum Markov semigroups}\label{sec:3}
In this section, we discuss how a hypercontractive or hyperbounded estimate implies entropy contraction in continuous time.

\subsection{Functional inequalities}\label{subsec:3-1}

A \emph{quantum Markov semigroup} (QMS) is a continuous semigroup $(P_t)_{t\geq0}:B(\mathcal{H})\to B(\mathcal{H})$ of
unital completely positive maps satisfying $P_t\circ P_s= P_{t+s}$ for $s,t\ge 0$ and $P_0=\id$. Throughout this section, we assume $\mathcal{H}$ is
finite-dimensional, and the semigroup is primitive with faithful invariant state $\sigma$.
We write $P_t=e^{-t\cL}$ with the generator
\[ \cL X=-\lim_{t\to0}\frac{1}{t}(P_tX-X) \ ,\]
and $K=\ker E_\sigma$. KMS symmetry is assumed only where stated,
and we omit $\sigma$ from $p$-norms when it is clear from the context.
For a faithful state $\rho$, the entropy production is
\cite[equations~(5)--(6)]{r13}
\begin{equation}\label{eq:3-1}
 \mathcal I_{\cL}(\rho)
 =\Tr[(\cL_*\rho)(\log\rho-\log\sigma)]
 =-\left.\frac{d}{dt}D(P_{t*}\rho\Vert\sigma)\right|_{t=0}.
\end{equation}
The HC constant $\alpha_H$ is the supremum of $\alpha \geq0$ such that
\begin{equation}\label{eq:3-2}
 \|P_t\|_{2\to1+e^{2\alpha t},\sigma}\leq1\qquad(t\geq0).
\end{equation}
For $X>0$, write $Y=\Gamma_\sigma^{\frac{1}{2}}(X)=\sigma^{\frac14}X\sigma^{\frac14}$, and set
\[
 \Ent_{2,\sigma}(X)
 =\frac12\{\Tr[Y^2(\log Y^2-\log\sigma)]-\Tr(Y^2)\log \Tr(Y^2)\},\qquad
 \mathcal E(X)=\operatorname{Re}\langle X,\cL X\rangle_\sigma.
\]
Define constants $\alpha_2$ and $\alpha_1$ as the best nonnegative constants for 
$L_2$-logarithmic Sobolev inequality (LSI) and modified logarithmic Sobolev inequality (MLSI):
\begin{align}\label{eq:3-3}
 \alpha_2\Ent_{2,\sigma}(X)&\leq \mathcal E(X), \tag{LSI}\\
 2\alpha_1D(\rho\Vert\sigma)&\leq\mathcal I_{\cL}(\rho). \tag{MLSI}
\end{align}
By Gronwall's lemma, MLSI is equivalent to exponential entropy decay from time zero \cite[equations~(5)--(6)]{r13}:
\begin{equation}\label{eq:3-10}
 D(P_{t*}\rho\Vert\sigma)\leq e^{-2\alpha_1 t}D(\rho\Vert\sigma),\quad t\geq0.
\end{equation}
We also use the norm--entropy derivative \cite[Theorem~4]{r13}:
\begin{equation}\label{eq:3-5}
 \left.\partial_r\log\|X\|_{r,\sigma}\right|_{r=2}
 =\frac{\Ent_{2,\sigma}(X)}{2\|X\|_{2,\sigma}^2},\qquad X>0,
\end{equation}
from which one has $\alpha_H\leq \alpha_2$ \cite[Theorem~15(1)]{r13}. Direct Dirichlet-form comparison under KMS symmetry gives
$\alpha_1\geq\alpha_2$ under strong $L_p$-regularity \cite[Propositions~8 and~13]{r13}, a condition satisfied by GNS-symmetric semigroups but not known to follow from KMS symmetry. In summary, for a primitive KMS-symmetric semigroup
\[\alpha_H\le \alpha_2\overset{(*)}{\le}\alpha_1 \]
where (*) requires strong $L_p$-regularity.

The next lemma shows $\alpha_H\leq \alpha_{2}, \alpha_H\leq 2\alpha_{1}$ without KMS symmetry or $L_p$-regularity. For KMS-symmetric semigroups, the HC to MLSI argument was discussed in
\cite[Section~7, EQ.~(7.2)--(7.3)]{r11}. 

\begin{lemma}[Local HC curve]\label{lem:3-local}
Let $r:[0,\varepsilon)\to[2,\infty)$ be right differentiable at zero, with
$r(0)=2$ and $r'(0)=v>0$. If $\|P_s\|_{2\to r(s)}\leq1$ for
$0\leq s<\varepsilon$, then
\[
 \alpha_1\geq\frac v4 \ ,\  \alpha_2 \geq \frac v2 \ ,\ \alpha_H\geq\frac v4\ .
\]
In particular, we have $\alpha_H\le 2\alpha_1$ and $\alpha_H\le \alpha_2$.
\end{lemma}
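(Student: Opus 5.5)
The three bounds follow from differentiating the hypothesis at $s=0$ in three different ways. The entropy bound comes from Proposition~\ref{cor:2-4}, the LSI bound from the norm--entropy derivative \eqref{eq:3-5}, and the HC bound from interpolation plus the semigroup property.

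\emph{MLSI.} For each $0<s<\varepsilon$ with $r(s)>2$, Proposition~\ref{cor:2-4} with $p=2$, $q=r(s)$ gives $D(P_{s*}\rho\Vert\sigma)\le \frac{2}{r(s)}D(\rho\Vert\sigma)$. If $r(s)=2$ the same inequality is just data processing. Hence, for a faithful state $\rho$,
\[
\frac{D(P_{s*}\rho\Vert\sigma)-D(\rho\Vert\sigma)}{s}\le \frac{2/r(s)-1}{s}\,D(\rho\Vert\sigma).
\]
As $s\downarrow0$ the right side tends to $-\frac{v}{2}D(\rho\Vert\sigma)$, using $\frac{d}{ds}\frac{2}{r(s)}\big|_{0}=-\frac{2v}{4}$. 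By \eqref{eq:3-1} the left side tends to $-\mathcal I_\cL(\rho)$. Therefore $\mathcal I_\cL(\rho)\ge \frac v2 D(\rho\Vert\sigma)$, i.e.\ $\alpha_1\ge v/4$. Since the MLSI is only needed for faithful $\rho$ (or by density), this suffices.

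\emph{LSI.} Fix $X>0$ and set $f(s)=\log\|P_sX\|_{r(s),\sigma}$. The hypothesis gives $f(s)\le\log\|X\|_{2,\sigma}=f(0)$, so the right derivative satisfies $f'(0^+)\le0$ if it exists. To compute it, I write
\[
f(s)-f(0)=\bigl[\log\|P_sX\|_{r(s)}-\log\|X\|_{r(s)}\bigr]+\bigl[\log\|X\|_{r(s)}-\log\|X\|_{2}\bigr].
\]
In finite dimensions, $(s,r)\mapsto\log\|P_sX\|_{r,\sigma}$ is $C^1$ near $(0,2)$ because $P_sX>0$ for small $s$. Hence the first bracket divided by $s$ tends to $\partial_s\log\|P_sX\|_2|_{0}=-\mathcal E(X)/\|X\|_2^2$. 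By \eqref{eq:3-5}, the second bracket divided by $s$ tends to $v\,\Ent_{2,\sigma}(X)/(2\|X\|_2^2)$. Thus $\mathcal E(X)\ge\frac v2\Ent_{2,\sigma}(X)$, so $\alpha_2\ge v/2$.

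\emph{HC for all times.} This is the step I expect to be the main obstacle, since Gross' LSI$\Rightarrow$HC argument is unavailable without regularity. Instead I will bootstrap the local curve with interpolation. Fix $\alpha<v/4$. Since $r(s)/2=1+\frac{v}{2}s+o(s)$ and $e^{2\alpha s}=1+2\alpha s+O(s^2)$ with $2\alpha<v/2$, there is $\delta>0$ such that $r(s)/2\ge e^{2\alpha s}$ for $0\le s\le\delta$.

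Next I interpolate, via \eqref{eq:2-4}, between $\|P_s\|_{2\to r(s)}\le1$ and $\|P_s\|_{\infty\to\infty}\le1$. With $1/p=(1-\theta)/2$ this gives $\|P_s\|_{p\to pr(s)/2}\le1$ for every $p\ge2$. Since $r(s)\ge2$,
\[
\frac{p\,r(s)}{2}-1\ \ge\ (p-1)\frac{r(s)}{2}\ \ge\ (p-1)e^{2\alpha s},
\]
and monotonicity of the weighted norms gives $\|P_s\|_{p\to1+(p-1)e^{2\alpha s}}\le1$ for all $p\ge2$ and $s\le\delta$.

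For arbitrary $t$, write $t=s_1+\dots+s_n$ with $s_j\le\delta$. Compose the estimates along the exponents $p_0=2$, $p_j=1+(p_{j-1}-1)e^{2\alpha s_j}$, which multiply out to $p_n=1+e^{2\alpha t}$. This yields $\|P_t\|_{2\to1+e^{2\alpha t}}\le1$, so $\alpha_H\ge\alpha$, and letting $\alpha\uparrow v/4$ gives $\alpha_H\ge v/4$.

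\emph{The comparisons.} Finally, if $\alpha_H>0$, then for any $0<\alpha<\alpha_H$ the curve $r(s)=1+e^{2\alpha s}$ satisfies the hypothesis with $v=2\alpha$. The two bounds above give $\alpha_1\ge\alpha/2$ and $\alpha_2\ge\alpha$. Letting $\alpha\uparrow\alpha_H$ yields $\alpha_H\le2\alpha_1$ and $\alpha_H\le\alpha_2$; if $\alpha_H=0$ both comparisons are trivial.
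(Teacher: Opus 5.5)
Your proposal is correct and follows essentially the same route as the paper: you differentiate Proposition~\ref{cor:2-4} at $s=0$ for the MLSI, apply the chain rule with \eqref{eq:3-5} for the LSI, and for the HC bound interpolate against $\|P_s\|_{\infty\to\infty}=1$ and then compose. The only difference is bookkeeping in the last step: the paper composes $n$ equal steps and keeps the multiplicative exponent $2(r(t/n)/2)^n\to 2e^{vt/2}$, which gives $\alpha=v/4$ at once, whereas you drop that slack to get the Gross-type curve $1+e^{2\alpha t}$ for each $\alpha<v/4$ and then pass to the supremum (your explicit derivation of the ``in particular'' comparisons is also fine).
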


\begin{proof}
Proposition~\ref{cor:2-4} gives
$$D(P_{s*}\rho\Vert\sigma)\leq \frac{2}{r(s)}D(\rho\Vert\sigma).$$
Since $2/r(s)=1-vs/2+o(s)$, differentiation at zero gives $\alpha_1\geq v/4$.
For the LSI constant, the norm--entropy derivative \eqref{eq:3-5} gives
\[
0\geq\left.\frac{d}{ds}\log\|P_sX\|_{r(s),\sigma}\right|_{s=0^+}
=\frac{(v/2)\Ent_{2,\sigma}(X)-\mathcal E(X)}{\|X\|_{2,\sigma}^2}.
\]
Thus $v\Ent_{2,\sigma}(X)\leq2\mathcal E(X)$, which gives $\alpha_2\geq v/2$.
For the all-time estimate, interpolation with $\|P_s\|_{\infty\to\infty}=1$ gives
\[
 \|P_s\|_{p\to pr(s)/2}\leq1,\qquad 2\leq p<\infty.
\]
For fixed $t>0$, composition over $n$ steps of length $t/n$ yields
\[
 \|P_t\|_{2\to2(r(t/n)/2)^n}\leq1,
\]
where the target exponent converges to $2e^{vt/2}$ as $n\to \infty$. The bound
on $\alpha_H$ follows from $1+e^{vt/2}\leq2e^{vt/2}$ and norm monotonicity.
\end{proof}

\iffalse
For completeness, the same entropy argument works at any input exponent $1<p<\infty$.
If $q(0)=p$, $\dot q(0)>0$, and
\begin{equation}\label{eq:3-6}
 \|P_t\|_{p\to q(t)}\leq1\qquad(0\leq t<\varepsilon),
\end{equation}
for a curve right differentiable at zero, then
\begin{equation}\label{eq:3-7}
 \alpha_1\geq\frac{\dot q(0)}{2p}.
\end{equation}
Under KMS symmetry, the second coefficient in Proposition~\ref{cor:2-4} improves this to
\begin{equation}\label{eq:3-8}
 \alpha_1\geq\frac{\dot q(0)}{2p}\max\left\{1,\frac1{p-1}\right\}.
\end{equation}
Indeed, the two EC coefficients have expansions
\[
 \frac p{q(t)}=1-\frac{\dot q(0)}p t+o(t),\qquad
 \frac{q(t)(p-1)}{p(q(t)-1)}=1-\frac{\dot q(0)}{p(p-1)}t+o(t).
\]
For $q(t)=1+(p-1)e^{4at}$, these give, respectively,
\begin{equation}\label{eq:3-9}
 \alpha_1\geq\frac{2a(p-1)}p,\qquad
 \alpha_1\geq\frac{2a}p\max\{p-1,1\}\quad\text{under KMS symmetry}.
\end{equation}
\fi 

%These are the normalizations of \cite[Definition~11]{r13} and
%\cite[equations~(6)--(8)]{r20}; the entropy convention in \cite[Section~2.2]{r6} is
%$2\Ent_{2,\sigma}$. The HC parameter $\alpha_H$ is the parameter $\alpha$
%in a curve $2\to1+e^{2\alpha t}$. All inequalities in this section are ordinary,
%unamplified inequalities.
\subsection{Modified logarithmic Sobolev bounds}\label{subsec:3-bounds}

For a finite-dimensional primitive QMS $P_t=e^{-\cL t}$ with $\lim_{t\to \infty }P_t=E_\sigma$, we define the KMS coercivity constant
\[
 \lambda_{\text{KMS}}:=\inf_{0\ne X\in\ker E_\sigma}
 \frac{\operatorname{Re}\langle X,\cL X\rangle_\sigma}{\|X\|_{2,\sigma}^2}\geq0.
\]
Under KMS symmetry, $\cL$ is self-adjoint with respect to $L_2(\sigma)$ and $\lambda_{KMS}$ is exactly the spectral gap
\begin{equation}\label{eq:3-4}
 \lambda_{\text{KMS}}=\inf_{0\ne X\in\ker E_\sigma}
 \frac{\langle X,\cL X\rangle_\sigma}{\|X\|_{2,\sigma}^2}=\min (\text{spec}(\cL) \setminus \{0\})=:\lambda>0,
\end{equation}
which is always positive in finite dimensions,  
and one has \cite[Theorem~16]{r13}
\begin{equation}\label{eq:3-12}
 \alpha_1\leq\lambda.
\end{equation}
For a finite-dimensional QMS without KMS symmetry, one has $\alpha_1\leq 2 \lambda_{\text{KMS}}$ losing a factor of $2$ due to the symmetrization trick. 

We derive the modified logarithmic Sobolev inequality from hyperboundedness with the help of the coercivity constant/spectral gap on centered elements, with or without KMS symmetry, but no $L_p$-regularity hypothesis is needed. 

\begin{theorem}[Bounds from centered coercivity]\label{thm:3-coercivity}
Let $(P_t)$ be a $\sigma$-preserving primitive QMS with positive KMS coercivity constant $\lambda_{KMS}=\lambda>0$.
Then
\begin{equation}\label{eq:3-24}
 \alpha_1\geq\frac{\alpha_H}{2}\geq\frac{\lambda}{2+\log \|\sigma^{-1}\|_\infty}.
\end{equation}
\end{theorem}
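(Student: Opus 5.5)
The middle inequality is the substantive part; the first inequality $\alpha_1\ge\alpha_H/2$ comes from Lemma~\ref{lem:3-local}. For $\alpha_H$, I would not build a local curve and iterate, since that only gives $\alpha_H\ge\lambda/(2+\log\|\sigma^{-1}\|_\infty)$. Instead I plan to verify the centered criterion of Lemma~\ref{lem:2-8} directly, for every $t>0$, along the curve $p(t)=1+e^{2\alpha t}$ with
\[
\alpha=\frac{2\lambda}{2+L},\qquad L=\log\|\sigma^{-1}\|_\infty .
\]
This gives $\|P_t\|_{2\to 1+e^{2\alpha t}}\le 1$ for all $t\ge0$, so $\alpha_H\ge 2\lambda/(2+L)$.

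\textbf{Step 1 (centered $L_2$ decay).} Because $E_\sigma P_t=E_\sigma$, each $P_t$ maps $K=\ker E_\sigma$ into itself. For $X\in K$,
\[
\frac{d}{dt}\|P_tX\|_{2,\sigma}^2=-2\operatorname{Re}\langle P_tX,\cL P_tX\rangle_\sigma\le-2\lambda\|P_tX\|_{2,\sigma}^2 .
\]
Hence $\|P_tX\|_{2,\sigma}\le e^{-\lambda t}\|X\|_{2,\sigma}$. This uses only the coercivity constant and needs no symmetry.

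\textbf{Step 2 (passing to $L_p$ on centered elements).} For any $Y$,
\[
\|Y\|_\infty\le\|\sigma^{-1/4}\|_\infty^2\,\|\sigma^{1/4}Y\sigma^{1/4}\|_2=\|\sigma^{-1}\|_\infty^{1/2}\|Y\|_{2,\sigma}.
\]
Combining this with the interpolation $\|Y\|_{p,\sigma}\le\|Y\|_{2,\sigma}^{2/p}\|Y\|_{\infty}^{1-2/p}$ gives, for $X\in K$,
\[
\|(P_t-E_\sigma)X\|_{p,\sigma}=\|P_tX\|_{p,\sigma}\le e^{-\lambda t}e^{(1/2-1/p)L}\|X\|_{2,\sigma}.
\]
Since $P_t-E_\sigma$ vanishes on $\mathbb CI$ and $L_2(\sigma)=\mathbb CI\oplus K$ orthogonally, the same bound holds for $\|P_t-E_\sigma\|_{2\to p,\sigma}$. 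Lemma~\ref{lem:2-8} (whose proof does not use symmetry) then yields $\|P_t\|_{2\to p}\le1$ whenever
\[
\log(p-1)+(1-2/p)L\le 2\lambda t .
\]

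\textbf{Step 3 (the elementary inequality, the crux).} Set $u=p-1=e^{2\alpha t}\ge1$. Then $1-2/p=(u-1)/(u+1)$, and
\[
2\lambda t=\frac{(2+L)\log u}{2}=\log u+\frac L2\log u .
\]
So the condition in Step 2 reduces to $\frac{u-1}{u+1}\le\frac12\log u$ for $u\ge1$. Both sides vanish at $u=1$, and the derivatives compare as $2/(u+1)^2\le 1/(2u)$, which is equivalent to $(u-1)^2\ge0$. Hence the condition holds for all $t\ge0$, and $\alpha_H\ge 2\lambda/(2+L)$.

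\textbf{Step 4 (first inequality).} For any $0<a<\alpha_H$, the curve $r(s)=1+e^{2as}$ satisfies $r(0)=2$, $r'(0)=4a$, and $\|P_s\|_{2\to r(s)}\le1$. Lemma~\ref{lem:3-local} then gives $\alpha_1\ge a$. Letting $a\uparrow\alpha_H$ gives $\alpha_1\ge\alpha_H$, which is in fact stronger than the stated $\alpha_1\ge\alpha_H/2$. The step I would double-check most carefully is this normalization in Lemma~\ref{lem:3-local}; even if it only delivers $\alpha_1\ge\alpha_H/2$, that is exactly the stated first inequality.
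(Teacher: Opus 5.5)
Your proof of the middle inequality is the paper's proof. Coercivity gives $\|P_t-E_\sigma\|_{2\to 2}\le e^{-\lambda t}$. The weighted comparison $\|X\|_{p,\sigma}\le\|\sigma^{-1}\|_\infty^{1/2-1/p}\|X\|_{2,\sigma}$ upgrades this to a $2\to p$ bound; you obtain it by interpolating between $L_2(\sigma)$ and $L_\infty$, the paper by the ideal property, with the same constant. Lemma~\ref{lem:2-8} is then checked directly along $1+e^{2\alpha t}$ for all $t$, and the condition reduces to $\tanh(\alpha t)\le\alpha t$, which is your $\frac{u-1}{u+1}\le\frac12\log u$. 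Your explicit choice $\alpha=2\lambda/(2+\log\|\sigma^{-1}\|_\infty)$ is the correct reading of the paper's last line, which should say $\alpha\le 2\lambda/(2+\log\|\sigma^{-1}\|_\infty)$ rather than $\alpha\ge$.

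Step~4 contains a slip: for $r(s)=1+e^{2as}$ one has $r'(0)=2a$, not $4a$. Lemma~\ref{lem:3-local} therefore gives $\alpha_1\ge v/4=a/2$. Letting $a\uparrow\alpha_H$ yields exactly $\alpha_1\ge\alpha_H/2$, which is the paper's ``$\alpha_H\le 2\alpha_1$'' and the stated first inequality. The claimed improvement $\alpha_1\ge\alpha_H$ does not follow from your argument. In the paper it is only available via $\alpha_1\ge\alpha_2\ge\alpha_H$, and the first of these needs strong $L_p$-regularity, which this theorem is designed to avoid. Drop that claim; as you anticipated, the statement itself is unaffected and your proof of it is complete.
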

\begin{proof}
For $X\in \ker E_\sigma$, coercivity gives
$$\frac{d}{dt}\|P_tX\|_{2,\sigma}^2\leq-2\lambda\|P_tX\|_{2,\sigma}^2,$$
which implies $\|P_t-E_\sigma\|_{2\to2}\leq e^{-\lambda t}$.
The elementary weighted norm comparison is
\begin{equation}\label{eq:3-22}
 \|X\|_{r,\sigma}\leq \|\sigma^{-1}\|_\infty^{1/2-1/r}\|X\|_{2,\sigma},\qquad 2\leq r<\infty.
\end{equation}
Indeed, put $Z=\sigma^{1/4}X\sigma^{1/4}$ and
$B=\sigma^{1/(2r)-1/4}$. The ideal property and Schatten norm monotonicity give
\begin{align*} \|X\|_{r,\sigma}=&\|\sigma^{1/(2r)-1/4}(\sigma^{1/4}X\sigma^{1/4}) \sigma^{1/(2r)-1/4}\|_r\leq\|\sigma^{1/(2r)-1/4}\|_\infty^2\|\sigma^{1/4}X\sigma^{1/4}\|_r
\\ \leq &\|\sigma^{-1}\|_\infty^{1/2-1/r}\|\sigma^{1/4}X\sigma^{1/4}\|_2=\|\sigma^{-1}\|_\infty^{1/2-1/r}\|X\|_{2,\sigma} \end{align*}
The case $r=4$ is used in \cite[Corollary~6]{r20}. Thus $$\|P_t-E_\sigma\|_{2\to r}\leq \|\sigma^{-1}\|_\infty^{1/2-1/r}e^{-\lambda t}.$$
Lemma~\ref{lem:2-8} and Proposition~\ref{cor:2-4} give HC $2\to r$ and EC with
coefficient $2/r$ whenever
\begin{equation}\label{eq:3-23}
 t\geq\frac{\log(r-1)+(1-2/r)\log\|\sigma^{-1}\|_\infty}{2\lambda}.
\end{equation}
For $r(t)=1+e^{2\alpha t}$, this condition becomes
\begin{equation}\label{eq:3-26}
 (\lambda-\alpha)t\geq\frac{\log\|\sigma^{-1}\|_\infty}{2}\tanh(\alpha t).
\end{equation}
The elementary bound $\tanh(\alpha t)\leq\alpha t$ shows that the above inequality \eqref{eq:3-26} holds for every $t\geq0$ when
\begin{align*}&\alpha\geq \frac{2\lambda}{(2+\log\|\sigma^{-1}\|_\infty)} \qedhere\end{align*}
\end{proof}
The above theorem gives HC and entropy decay using only the 
coercivity and the smallest eigenvalue of $\sigma$, without KMS symmetry or $L_p$-regularity. 
A similar estimate for the LSI constant $\alpha_2$ was obtained in \cite{r20} for KMS-symmetric semigroups through Rothaus's lemma, which would need $L_p$-regularity to compare with the MLSI constant $\alpha_1$.
In particular, our bound \eqref{eq:3-24} improves the 
qualitative MLSI conclusion
for primitive KMS-symmetric semigroups that also appears in
\cite[Theorem~3.1]{LWW2026} and \cite[Corollary~6.1]{GG2026}, and moreover, the loss of the logarithmic term $\log\|\sigma^{-1}\|_\infty$ is asymptotically tight as in the classical case. 

A faster curve is available after a delay: for $0<\alpha<\lambda$, use
$\tanh(\alpha t)\leq1$ in \eqref{eq:3-26} to similarly obtain the HC and EC estimates whenever
\begin{equation}\label{eq:3-27}
 t\geq\frac{\log\|\sigma^{-1}\|_\infty}{2(\lambda-\alpha)}.
\end{equation}
For the tracial state $\sigma=I/d$ on $M_d$, a stronger reversible comparison is
recalled in \cite[Theorem~2.2]{MFW2016}:
$$\alpha_1\geq\alpha_2\geq \frac{2(1-2/d)\lambda}{\log(d-1)},$$ with coefficient $1$ at $d=2$.
Thus \eqref{eq:3-24} is a general faithful-state bound, not an optimal tracial bound.

The time zero $t=0$ norm comparison can be replaced via a single HB estimate. 
The following interpolation argument follows \cite[Theorem~5]{r20} and the classical
argument of \cite[Theorem~3.9]{r8}. Combining it with the centered HC criterion gives
the following bounds.  

\begin{theorem}[MLSI from hyperbounded estimate]\label{thm:3-endpoint}
Let $P_t$ be a primitive QMS that is KMS-symmetric with respect to $\sigma$, and let $\lambda>0$ be its spectral gap. Assume that for some
$2<q\le \infty$, $t_q>0$, and $M_q\geq1$,
\begin{equation}\label{eq:3-28}
 \|P_{t_q}\|_{2\to q}\leq M_q.
\end{equation}
Then
\begin{equation}\label{eq:3-35}
 \alpha_1\geq\beta:=\frac{(1-2/q)\lambda}{2[\lambda t_q+\log M_q+1-2/q]}.
\end{equation}
Moreover, the LSI and HC constants satisfy
\begin{equation}\label{eq:3-36}
 \alpha_2\geq2\beta,\qquad \alpha_H\geq\beta.
\end{equation}
\end{theorem}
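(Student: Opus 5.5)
The plan is to build a local hypercontractive curve $r(s)$ with $r(0)=2$ and slope $r'(0)=v=4\beta$, and then apply Lemma~\ref{lem:3-local}. That lemma gives $\alpha_1\ge v/4=\beta$, $\alpha_2\ge v/2=2\beta$ and $\alpha_H\ge v/4=\beta$, which are exactly \eqref{eq:3-35}--\eqref{eq:3-36}. The curve will come from Lemma~\ref{lem:2-8}. So for small $s$ I need a bound $\|P_s-E_\sigma\|_{2\to r(s)}\le (r(s)-1)^{-1/2}$, and I will get it by interpolating two centered estimates: the spectral-gap decay and the hyperbounded endpoint \eqref{eq:3-28}.

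\emph{Step 1: a centered endpoint family.} By KMS symmetry, $\cL$ is self-adjoint and nonnegative on $L_2(\sigma)$ and commutes with $E_\sigma$. Hence $P_{z}=e^{-z\cL}$ makes sense for $\operatorname{Re}z\ge0$, and $P_{iy}$ is unitary on $L_2(\sigma)$. Consider the analytic family $z\mapsto P_{zt_q}(I-E_\sigma)$ on the strip $0\le\operatorname{Re}z\le1$.
\begin{itemize}
\item On $\operatorname{Re}z=0$ its $2\to2$ norm is at most $1$.
\item On $\operatorname{Re}z=1$ we write $P_{t_q+iy t_q}(I-E_\sigma)=P_{t_q}P_{iyt_q}(I-E_\sigma)$. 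Using \eqref{eq:3-28}, unitarity of $P_{iyt_q}$ on $L_2(\sigma)$, and the fact that $I-E_\sigma$ is an $L_2$-contraction, its $2\to q$ norm is at most $M_q$.
\end{itemize}
Stein's analytic interpolation (the analytic-family version mentioned in Section~\ref{subsec:2-1}) then gives, with $\theta=s/t_q\in[0,1]$,
\[
\|P_s-E_\sigma\|_{2\to p_\theta}\le M_q^{\theta},\qquad \frac1{p_\theta}=\frac{1-\theta}{2}+\frac\theta q .
\]
The case $q=\infty$ is included with $1/q=0$.

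\emph{Step 2: a second interpolation with the gap.} Alongside this we have $\|P_s-E_\sigma\|_{2\to2}\le e^{-\lambda s}$. Interpolating the output exponent between $2$ and $p_\theta$ with a parameter $\varphi\in(0,1)$, exactly as in the proof of Theorem~\ref{prop:2-9}, gives
\[
\|P_s-E_\sigma\|_{2\to r}\le e^{-\lambda s(1-\varphi)}M_q^{\theta\varphi},\qquad \frac12-\frac1r=\varphi\theta\Bigl(\frac12-\frac1q\Bigr).
\]
Define $r(s)$ by this relation. Then $r(0)=2$ and $r'(0)=2\varphi(1-2/q)/t_q$.

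\emph{Step 3: checking Lemma~\ref{lem:2-8} and optimizing.} Since $\log(r-1)\le r-2$, it suffices to have
\[
r-2\le 2\lambda s(1-\varphi)-2\theta\varphi\log M_q .
\]
To first order in $s$ the left side is $2\varphi s(1-2/q)/t_q$. The condition therefore reads
\[
\varphi\,[\lambda t_q+\log M_q+1-2/q]<\lambda t_q,
\]
so the optimal choice is $\varphi\uparrow\lambda t_q/(\lambda t_q+\log M_q+1-2/q)$. This yields
\[
v=\frac{2\lambda(1-2/q)}{\lambda t_q+\log M_q+1-2/q}=4\beta .
\]

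\emph{Main obstacle.} The delicate point is that Lemma~\ref{lem:3-local} needs the estimate $\|P_s\|_{2\to r(s)}\le1$ exactly for all small $s$, not merely to first order. The exact relation gives $r-2=r\varphi\theta(1-2/q)$, which is slightly larger than its linearization. So I will first take $\varphi$ strictly below the optimal value, say $\varphi_\varepsilon$. For that choice the inequality holds with strict margin for all $s<s_\varepsilon$. This gives the bounds with $\beta_\varepsilon$, and letting $\varepsilon\to0$ gives them with $\beta$. A smaller check is that the Stein interpolation is legitimate in the weighted scale. This relies on the fact that $P_{iy}$ is only an $L_2$-isometry, which is why both endpoint families are taken with input space $L_2(\sigma)$.
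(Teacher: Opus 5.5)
Your proof is correct. It reaches the slope $4\beta$ by a different mechanism in the step that combines the two estimates.

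\textbf{The paper's route.} After the same Stein interpolation, the paper uses the semigroup law on $\ker E_\sigma$. Writing $P_T=P_{u_r}P_{T-u_r}$ with $u_r=t_q\theta_r$, it gets $\|P_T|_{\ker E_\sigma}\|_{2\to r}\le M_q^{\theta_r}e^{-\lambda(T-u_r)}$. Lemma~\ref{lem:2-8} then gives HC whenever $T\ge F(r)$, for an explicit increasing threshold $F$ with $F(2)=0$. The local curve is $F^{-1}$, whose slope is exactly $1/F'(2)=4\beta$.

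\textbf{Your route.} You keep the time fixed and interpolate the output norm between the $L_2$ gap bound and the $L_{p_\theta}$ Stein bound, exactly as in Theorem~\ref{prop:2-9}. This gives explicit curves $r_\varphi(s)=2/(1-\varphi b_q s/t_q)$, with $b_q=1-2/q$, of slope $2\varphi b_q/t_q$.

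These curves pass the test of Lemma~\ref{lem:2-8} only for $\varphi<\varphi^*=\lambda t_q/(\lambda t_q+\log M_q+b_q)$. You are right that the endpoint genuinely fails. At $\varphi=\varphi^*$ the right-hand side of the test is exactly $2a$ with $a=\varphi^* b_q s/t_q$, while $\log(r_{\varphi^*}-1)=2\operatorname{artanh}(a)>2a$. So $\beta$ is reached only as a limit. That is harmless, because $\alpha_1,\alpha_2,\alpha_H$ are best constants.

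The two optimizations are the same in disguise: $\varphi^*$ is the limit of $u_r/F(r)$ as $r\downarrow 2$, i.e.\ the fraction of time the paper gives to the hyperbounded factor.

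\textbf{Comparison.}
\begin{itemize}
\item The paper's route produces one curve with the sharp slope. As a by-product it gives the finite-exponent estimates \eqref{eq:3-34} for all admissible $r$ and all $T\ge F(r)$, not just an infinitesimal statement.
\item Your route avoids the inverse-function step and reuses the mechanism of Theorem~\ref{prop:2-9} verbatim. The price is the $\varepsilon\to0$ limit, and curves whose interval of validity shrinks to zero as $\varphi\uparrow\varphi^*$.
\end{itemize}

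Putting $I-E_\sigma$ into the analytic family is unnecessary but harmless. The paper interpolates $P_{zt_q}$ itself and restricts to centered elements afterwards.
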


\begin{proof}
Set $b_q=1-2/q$. We use the Stein--Weiss argument of
\cite[equations~(18)--(19)]{r20}. For $K_z=e^{-zt_q\cL}$ on the strip
$0\leq\operatorname{Re}z\leq1$, KMS self-adjointness gives
$\|K_{iy}\|_{2\to2}=1$ and $\|K_{1+iy}\|_{2\to q}\leq M_q$.
Interpolation in the weighted $L_p$ scale therefore yields
\begin{equation}\label{eq:3-29}
 \|P_u\|_{2\to p(u)}\leq M_q^{u/t_q},\qquad
 p(u)=\frac2{1-b_qu/t_q},\qquad 0\leq u\leq t_q.
\end{equation}
For $2\leq r\leq q$ with $r<\infty$, put
\begin{equation}\label{eq:3-32}
 \theta_r=\frac{1-2/r}{b_q},\qquad u_r=t_q\theta_r.
\end{equation}
Since $\|P_T|_{\ker E_\sigma }\|_{2\to r}\leq M_q^{\theta_r}e^{-\lambda(T-u_r)}$ for $T\geq u_r$,
Lemma~\ref{lem:2-8} gives HC as soon as
\begin{equation}\label{eq:3-33}
 T\geq F(r):=t_q\theta_r+
 \frac{\theta_r\log M_q+\frac12\log(r-1)}\lambda.
\end{equation}
Together with Proposition~\ref{cor:2-4}, this proves the finite-exponent estimates
\begin{equation}\label{eq:3-34}
 \|P_T\|_{2\to r}\leq1,\qquad
 D(P_{T*}\rho\Vert\sigma)\leq\frac2rD(\rho\Vert\sigma).
\end{equation}
The function $F$ is continuously differentiable and strictly increasing on $[2,q)$,
with $F(2)=0$ and
\[
 F'(2)=\frac{\lambda t_q+\log M_q+b_q}{2b_q\lambda}.
\]
Its inverse $r(T)=F^{-1}(T)$ is a local HC curve, defined for $T\geq0$ sufficiently small,
whose initial slope is $r'(0)=4\beta$. Lemma~\ref{lem:3-local} proves all three bounds.
\end{proof}

The MLSI and LS$_2$ estimates use the initial slope directly; the extension to a full
HC curve gives the rate $\beta$. In the normalization of \cite{r20},
Theorems~\ref{thm:3-coercivity} and~\ref{thm:3-endpoint} give
$\alpha_2\geq2\lambda/(2+\log\|\sigma^{-1}\|_\infty)$ and $\alpha_2\geq2\beta$, respectively.
These are twice the corresponding bounds in \cite[Eqs.~(30), (11)]{r20}, after
reversing the generator sign. The interpolation method and the existence
of an all-time HC curve without $L_p$-regularity are already present in \cite{r20}.

\subsection{Delayed hypercontractivity and the lack of MLSI}\label{subsec:3-3}

Theorems~\ref{thm:3-coercivity} and~\ref{thm:3-endpoint} use $L_2$-decay of centered elements with
prefactor one. In finite dimensions, primitivity instead only guarantees an estimate
\begin{equation}\label{eq:3-14}
	\|P_t-E_\sigma \|_{2\to2}\leq Ce^{-\nu t},\qquad t\geq0,
\end{equation}
for some $C\geq1$ and $\nu>0$. Indeed, every positive rate below the smallest real part of
the spectrum of $\cL|_K$ is admissible after absorbing the Jordan polynomial factors
into $C$. This does not ensure $C=1$, and hence does not give the coercivity required
in Theorem~\ref{thm:3-coercivity}.

The norm comparison \eqref{eq:3-22} gives
\[
\|P_t-E_\sigma \|_{2\to r}
\leq C\|\sigma^{-1}\|_\infty^{1/2-1/r}e^{-\nu t},\qquad 2\leq r<\infty.
\]
For $r(t)=1+e^{2\alpha t}$, Lemma~\ref{lem:2-8} gives HC whenever
\[
(\nu-\alpha)t\geq\log C+
\frac12\log\|\sigma^{-1}\|_\infty\tanh(\alpha t).
\]
Since $\tanh(\alpha t)\leq1$, for every $0<\alpha<\nu$ and
\begin{equation}\label{eq:3-delay-static}
	t\geq\tau(\alpha):=
	\frac{\log C+\frac12\log\|\sigma^{-1}\|_\infty}{\nu-\alpha},
\end{equation}
we have
\begin{equation}\label{eq:3-delayed-hc-ec}
		\|P_t\|_{2\to1+e^{2\alpha t}}\leq1,\quad
		D(P_{t*}\rho\Vert\sigma)\leq\frac2{1+e^{2\alpha t}}D(\rho\Vert\sigma)
		\leq e^{-\alpha t}D(\rho\Vert\sigma).
\end{equation}
Here the entropy estimate follows from Proposition~\ref{cor:2-4} and
$2/(1+e^{2\alpha t})=e^{-\alpha t}/\cosh(\alpha t)$.
Nevertheless, such a delayed estimate does not imply positive MLSI $\alpha_1>0$: it does not provide entropy decay from time zero.

The following qubit model makes this limitation explicit. It appears in
\cite[section ``Hypocoercivity and Physical Intuition'']{r9}; the same generator and
zero-entropy-production states occur in \cite[Propositions~4.1--4.2]{LWW2026}.

\begin{example}[Delayed decay with zero MLSI constant]\label{ex:3-zero}
	Let $\sigma=I/2$ on $M_2$ and $h,\gamma>0$. The QMS with generator
	\[
	-\cL_*(\rho)=-i[h\sigma_x,\rho]+\gamma(\sigma_z\rho\sigma_z-\rho)
	\]
	is primitive and satisfies \eqref{eq:3-14} for suitable $C,\nu$, hence the delayed
	HC and EC estimates \eqref{eq:3-delayed-hc-ec}, but $\alpha_1=0$ and $\lambda_{\mathrm{KMS}}=0$.
    
    To verify these assertions, write $\rho=(I+x\sigma_x+y\sigma_y+z\sigma_z)/2$.
The Bloch equations are
\[
\dot x=-2\gamma x,\qquad
\frac{d}{dt}\begin{pmatrix}y\\z\end{pmatrix}
=\begin{pmatrix}-2\gamma&-2h\\2h&0\end{pmatrix}
\begin{pmatrix}y\\z\end{pmatrix}.
\]
The two eigenvalues of the displayed matrix are
$-\gamma\pm\sqrt{\gamma^2-4h^2}$ and have strictly negative real parts, so every
state converges to $I/2$. For every
$0<\nu<\gamma-\operatorname{Re}\sqrt{\gamma^2-4h^2}$, matrix exponential estimates
give a finite $C\geq1$ in \eqref{eq:3-14}, including the possible Jordan case.

For $\rho_a=(I+a\sigma_z)/2$, $0<|a|<1$, the dissipative part vanishes and the
Hamiltonian part has zero entropy production. Hence
\[
\mathcal I_{\cL}(\rho_a)=0,\qquad D(\rho_a\Vert I/2)>0,
\]
which gives $\alpha_1=0$. Finally,
$\operatorname{Re}\langle\sigma_z,\cL\sigma_z\rangle_\sigma=0$ and $L_2$
contractivity imply $\lambda_{\mathrm{KMS}}=0$.
\end{example}

\section{Entropy contraction and approximate tensorization}\label{sec:4}
For two conditional expectations, approximate tensorization easily implies entropy contraction
of their average. Theorem~\ref{thm:4-average} proves the converse also holds up to absolute constant.
The proof controls the Jensen--Shannon term in the entropy loss.
%Section~\ref{sec:4.5} relates the obstruction for products of conditional expectations to
%\cite{CCGP2025} and gives a simple partition example with explicit constants.

%\subsection{Averaged Channels}\label{sec:4.1}
Let $E_i:B(\mathcal{H})\to\cN_i$, $i=1,2$, be
conditional expectations preserving a common faithful state $\sigma$.
Namely, each $E_i$ is a unital completely positive map, satisfying 
\[ E_i(AXB)=AE_i(X)B \ \ \forall \ \ A,B\in\cN_i\  \text{ and } \qquad  E_{i*}\sigma=\sigma \ .\]
These expectations are uniquely determined by the preserving state $\sigma$ and their range $\mathcal{N}_i$ as orthogonal projections in $L_2(\sigma)$ \cite[Definition~1 and Proposition~1]{r4}.
The intersection range $\cN=\cN_1\cap\cN_2$ also admits a $\sigma$-preserving expectation
$E_{\cN}$. In this section, we focus on the average of the two conditional expectations
\[  \Phi_{\mathrm{av}}=\frac{E_1+E_2}{2} \]
as a positive self-adjoint contraction with fixed space $\cN$. Its powers iteration $\Phi_{\mathrm{av}}^n$ converges to $E_{\cN}$. 

For a state $\rho$, write
\begin{equation}\label{eq:4-1}
 \rho_i=E_{i*}\rho,\qquad \rho_{\cN}=E_{\cN*}\rho,\qquad
 D_{\cN}(\rho)=D(\rho\Vert\rho_{\cN}).
\end{equation}
The inclusion $\cN\subset \cN_i$ and chain rule implies that 
\begin{equation}\label{eq:4-2}
 E_{\cN}E_i=E_{\cN}=E_iE_{\cN},\qquad
 D(\rho||\rho_{\cN})=D(\rho\Vert\rho_i)+D(\rho_i||\rho_{\cN}),\qquad i=1,2,
\end{equation}
for all states $\rho$, including singular states.
\cite[Lemma~3.4 of the arXiv version]{r12}.
It is known that the exact entropy tensorization holds (also called entropy factorization) that for all state $\rho$,
\[ D(\rho\Vert\rho_{\cN})\le D(\rho \Vert\rho_1)+D(\rho \Vert\rho_2) \]
if and only if the condition expectation satisfies the commuting square condition 
\[ E_1E_2=E_2E_1=E_{\mathcal{N}}\]

Following the convention of approximate tensorization 
\cite[Definition~2, with defect $d=0$]{r4}, we say $(E_1,E_2)$ satisfies approximate tensorization with constant $c\ge 1$, denoted as $\mathrm{AT}(c)$, if
\begin{equation}\label{eq:4-3}
 D(\rho||\rho_\cN)\leq c\bigl(D(\rho\Vert\rho_1)+D(\rho\Vert\rho_2)\bigr)
 \qquad\text{for every state $\rho$}.
\end{equation}
The optimal (minimal) constant $c$ above is denoted by $c_{\mathrm{AT}}$. 

The average of conditional expectation map is
\begin{equation}\label{eq:4-4}
 \Phi_{\mathrm{av}}=\frac{E_1+E_2}{2},\qquad
 \Phi_{\mathrm{av},*}\rho=\frac{\rho_1+\rho_2}{2}, %\qquad T=E_{2*}E_{1*}.
\end{equation}
%The product applies $E_{1*}$ followed by $E_{2*}$ and has trace dual $E_1E_2$.
%In classical two-block systems, these are random-scan and systematic-scan updates;
%one product is a full sweep \cite[Algorithms~1--2]{r16}.
%Both maps preserve the $\cN$-marginal.
Entropy contraction for the average means
\begin{equation}\label{eq:4-5}
 D(\Phi_{\mathrm{av},*}\rho||\rho_{\cN})\leq\eta D(\rho||\rho_{\cN})
 \qquad\text{for every state $\rho$},\qquad 0\leq\eta<1,
\end{equation}
and is denoted by $\mathrm{EC}_{\cN}(\eta)$.
The reference state both sides are $\rho_{\cN}$, which depends on the input $\rho$.
%We use the same notation for $T$ when it replaces $\Phi_{\mathrm{av},*}$.
%The complete versions require the same coefficients after replacing $E_i$ and
%$E_{\cN}$ by $E_i\otimes\id_{M_k}$ and $E_{\cN}\otimes\id_{M_k}$ for every $k$;
%the optimal normalized complete AT constant is denoted by $c_{\mathrm{AT}}^{\mathrm c}$.

The known implication from AT to average EC follows from \eqref{eq:4-2} and convexity
\cite[Eqs.~(11)--(12)]{r7}:
\begin{equation}\label{eq:4-7}
 D(\Phi_{\mathrm{av},*}\rho||\rho_{\cN})
 \leq\frac12\sum_{i=1}^2D(\rho_i\|\rho_{\cN})
 \leq\left(1-\frac1{2c_{AT}}\right)D(\rho\| \rho_{\cN}).
\end{equation}
The converse requires an upper bound on the entropy lost in the averaging step.

\begin{theorem}[AT from EC]\label{thm:4-average}Let $E_i:B(\mathcal{H})\to\cN_i$, $i=1,2$ be
conditional expectations preserving a common faithful state $\sigma$, and $E_{\cN}$ be the $\sigma$-preserving conditional expectation onto the intersection $\cN=\cN_1\cap \cN_2$.
Assume that the averaged map satisfies entropy contraction that for some $0<\eta<1$ and  all states $\rho$,
\begin{align} D(\Phi_{\mathrm{av},*}\rho||\rho_{\cN})\leq\eta D(\rho||\rho_{\cN}).\label{eq:EC2}\end{align}
Then the following AT holds
\begin{equation}\label{eq:4-16}
 D(\rho||\rho_{\cN})\leq\frac{3}{2(1-\eta)}
 \left(D(\rho\Vert\rho_1)+D(\rho\Vert\rho_2)\right)
 \qquad\text{for every state $\rho$}.
\end{equation}
In particular,
\begin{equation}\label{eq:4-17}
 c_{\mathrm{AT}}\leq\frac{3}{2(1-\eta)}.
\end{equation}
\end{theorem}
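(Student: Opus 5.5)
The plan is to compare the entropy lost in one averaging step with the two conditional entropies $D(\rho\Vert\rho_i)$. Rearranging the hypothesis \eqref{eq:EC2} gives
\[
(1-\eta)D(\rho\Vert\rho_{\cN})\le D(\rho\Vert\rho_{\cN})-D(\Phi_{\mathrm{av},*}\rho\Vert\rho_{\cN}).
\]
The right-hand side will be bounded by $\tfrac32\bigl(D(\rho\Vert\rho_1)+D(\rho\Vert\rho_2)\bigr)$, which is exactly \eqref{eq:4-16}.

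\textbf{Step 1 (exact identity for the mixture).} Write $m=\tfrac12(\rho_1+\rho_2)=\Phi_{\mathrm{av},*}\rho$. The quantum Jensen--Shannon divergence is
\[
\mathrm{JS}(\alpha,\beta)=\tfrac12 D(\alpha\Vert\tfrac{\alpha+\beta}2)+\tfrac12D(\beta\Vert\tfrac{\alpha+\beta}2).
\]
For any state $\omega$ with $\supp\rho_i\subseteq\supp\omega$, expanding the logarithms gives the compensation identity
\[
\tfrac12 D(\rho_1\Vert\omega)+\tfrac12D(\rho_2\Vert\omega)=D(m\Vert\omega)+\mathrm{JS}(\rho_1,\rho_2).
\]
Take $\omega=\rho_{\cN}$. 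This is legitimate because $\rho_{\cN}=E_{\cN*}\rho_i$ and $D(\rho_i\Vert\rho_{\cN})$ is finite by \eqref{eq:4-2}. Combining with the chain rule \eqref{eq:4-2}, $D(\rho_i\Vert\rho_{\cN})=D(\rho\Vert\rho_{\cN})-D(\rho\Vert\rho_i)$, yields
\[
D(\rho\Vert\rho_{\cN})-D(m\Vert\rho_{\cN})=\tfrac12\bigl(D(\rho\Vert\rho_1)+D(\rho\Vert\rho_2)\bigr)+\mathrm{JS}(\rho_1,\rho_2).
\]

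\textbf{Step 2 (bound the Jensen--Shannon term).} This is the main step. I will use the triangle inequality for $\sqrt{\mathrm{JS}}$ from \cite{r22}, with $\rho$ as the intermediate point:
\[
\sqrt{\mathrm{JS}(\rho_1,\rho_2)}\le\sqrt{\mathrm{JS}(\rho_1,\rho)}+\sqrt{\mathrm{JS}(\rho,\rho_2)}.
\]
Each term on the right must be controlled by the one-directional divergence $D(\rho\Vert\rho_i)$ alone, without the reverse divergence $D(\rho_i\Vert\rho)$. The identity of Step 1 applied to the pair $(\rho,\rho_i)$ shows $\mathrm{JS}(\rho,\rho_i)=\min_\omega\tfrac12\bigl(D(\rho\Vert\omega)+D(\rho_i\Vert\omega)\bigr)$. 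Choosing $\omega=\rho_i$ gives $\mathrm{JS}(\rho,\rho_i)\le\tfrac12D(\rho\Vert\rho_i)$. Hence, using $(x+y)^2\le 2x^2+2y^2$,
\[
\mathrm{JS}(\rho_1,\rho_2)\le\Bigl(\sqrt{\tfrac12D(\rho\Vert\rho_1)}+\sqrt{\tfrac12D(\rho\Vert\rho_2)}\Bigr)^2\le D(\rho\Vert\rho_1)+D(\rho\Vert\rho_2).
\]

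\textbf{Step 3 (conclude).} Inserting the bound of Step 2 into the identity of Step 1 gives
\[
(1-\eta)D(\rho\Vert\rho_{\cN})\le\tfrac32\bigl(D(\rho\Vert\rho_1)+D(\rho\Vert\rho_2)\bigr),
\]
which is \eqref{eq:4-16}; then \eqref{eq:4-17} follows by the definition of $c_{\mathrm{AT}}$.

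Two technical points need care. First, the supports: both sides of each identity can be $+\infty$ only together. If singular states cause trouble, I will argue first for faithful $\rho$ and then pass to the limit via $(1-\varepsilon)\rho+\varepsilon\sigma$. Second, I need the quantum triangle inequality of \cite{r22} to apply to arbitrary density matrices, not only pure states.
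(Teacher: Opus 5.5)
Your proposal is correct and is essentially the paper's proof. It uses the same ingredients:
\begin{itemize}
\item the Donald identity with reference state $\rho_{\cN}$, combined with the chain rule \eqref{eq:4-2};
\item the triangle inequality for $\sqrt{J}$ from \cite{r22}, with $\rho$ as the intermediate point;
\item the bound $J(\rho,\rho_i)\le\frac12 D(\rho\Vert\rho_i)$, obtained by taking the reference state equal to $\rho_i$.
\end{itemize}
The only cosmetic difference is the last estimate. You bound $\bigl(\sqrt{D(\rho\Vert\rho_1)/2}+\sqrt{D(\rho\Vert\rho_2)/2}\bigr)^2\le D(\rho\Vert\rho_1)+D(\rho\Vert\rho_2)$ directly. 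The paper instead expands the square and uses $\sqrt{ab}\le\frac12(a+b)$, which gives the same constant $\frac32$.
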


\begin{proof}
Set $\bar\rho=(\rho_1+\rho_2)/2$ as the average. For states $A,B$, recall the quantum Jensen--Shannon divergence \cite[Section~1.2]{r22}
\begin{equation}\label{eq:4-8}
 J(A,B)=\frac12D\left(A\,\middle\Vert\,\frac{A+B}{2}\right)
       +\frac12D\left(B\,\middle\Vert\,\frac{A+B}{2}\right).
\end{equation}
Donald's identity \cite[equation~(5)]{r19} reads
\begin{equation}\label{eq:4-9}
 \frac12D(A\Vert R)+\frac12D(B\Vert R)
 =D\left(\frac{A+B}{2}\,\middle\Vert\,R\right)+J(A,B),
\end{equation}
when $\supp A,\supp B\subseteq\supp R$.
Taking $A=\rho_1$, $B=\rho_2$, and $R=\rho_{\cN}$, and using \eqref{eq:4-2}, gives
the entropy-loss identity
\begin{equation}\label{eq:4-12}
 D(\rho\|\rho_{\cN})-D(\bar\rho\|\rho_{\cN})
 =\frac12\bigl(D(\rho\Vert\rho_1)+D(\rho\Vert\rho_2)\bigr)+J(\rho_1,\rho_2).
\end{equation}

By Virosztek's metric theorem \cite[Theorem~1]{r22}, for all states $A,B,C$,
\begin{equation}\label{eq:4-10}
 \sqrt{J(A,C)}\leq\sqrt{J(A,B)}+\sqrt{J(B,C)}.
\end{equation}
By taking $R=B$ in \eqref{eq:4-9} and nonnegativity of relative entropy, we also have
\begin{equation}\label{eq:4-11}
 J(A,B)\leq\frac12D(A\Vert B),\qquad
 J(A,B)\leq\frac12D(B\Vert A).
\end{equation}
Using the original input $\rho$ as the intermediate state in \eqref{eq:4-10} yields
\[
 \begin{aligned}
 \sqrt{J(\rho_1,\rho_2)}
 \leq\sqrt{J(\rho_1,\rho)}+\sqrt{J(\rho,\rho_2)}
 \leq\sqrt{\frac{D(\rho\Vert\rho_1)}2}+\sqrt{\frac{D(\rho\Vert\rho_2)}2}.
 \end{aligned}
\]
Combining the above discussion with the entropy contraction assumption \eqref{eq:EC2},
\begin{align}
 (1-\eta) D(\rho||\rho_{\cN})
 &\leq D(\rho||\rho_{\cN})-D(\bar\rho||\rho_{\cN})\label{eq:4-13}\\
 &\leq \frac{1}{2}\left(D(\rho\Vert\rho_1)+D(\rho\Vert\rho_2)\right) +J(\rho_1,\rho_2)\label{eq:4-141}\\
 &\leq \frac{1}{2}\left(D(\rho\Vert\rho_1)+D(\rho\Vert\rho_2)\right) +\frac{1}{2}\left(\sqrt{D(\rho\Vert\rho_1)}+\sqrt{D(\rho\Vert\rho_2)}\right)^2\label{eq:4-142}\\
 &= D(\rho\Vert\rho_1)+D(\rho\Vert\rho_2) +\sqrt{D(\rho\Vert\rho_1)D(\rho\Vert\rho_2)}\label{eq:4-143}\\
 &\leq\frac32\bigl(D(\rho\Vert\rho_1)+D(\rho\Vert\rho_2)\bigr),\label{eq:4-15}
\end{align}
where the first inequality follows from entropy contraction 
$D(\bar{\rho}||\rho_{\cN})\le \eta D(\rho||\rho_{\cN}) $. This completes the proof.
\iffalse
For the complete assertion, apply the same argument to
$E_i^{(k)}=E_i\otimes\id_{M_k}$ and $E_{\cN}^{(k)}=E_{\cN}\otimes\id_{M_k}$.
These expectations preserve $\sigma\otimes I_k/k$, and their intersection is
$\cN\otimes M_k$.
Writing $D_{\cN}^{(k)}(\omega)=D(\omega\Vert E_{\cN*}^{(k)}\omega)$, the hypothesis is
\begin{equation}\label{eq:4-19}
 D_{\cN}^{(k)}(\Phi_{\mathrm{av},*}^{(k)}\omega)
 \leq\eta D_{\cN}^{(k)}(\omega)
 \qquad\text{for every $k$ and every state $\omega$}.
\end{equation}
Every step above has the same constant at every matrix level, giving
\begin{equation}\label{eq:4-20}
 D_{\cN}^{(k)}(\omega)\leq\frac{3}{2(1-\eta)}
 \bigl(D(\omega\Vert E_{1*}^{(k)}\omega)+D(\omega\Vert E_{2*}^{(k)}\omega)\bigr).
\end{equation}
Thus $c_{\mathrm{AT}}^{\mathrm c}\leq3/[2(1-\eta)]$.\fi
\end{proof}

The above bound holds for arbitrary intersection $\cN_1\cap\cN_2=\cN$. It also holds completely: a complete EC bound with one coefficient $\eta<1$
implies complete AT with the same constant $3/[2(1-\eta)]$. Approximate tensorization for noncommuting expectations
and its complete versions are studied in \cite{r4,r10}.
Gao and Rouz\'e \cite[Corollary~5.4 and Remark~5.5]{r10} also obtain index-independent
bounds from mixing in completely positive order; converting $L_2$ estimates into that
mixing condition introduces a conditional-expectation index.
Theorem~\ref{thm:4-average} instead starts with an entropy contraction coefficient.
Its conversion to AT has no additional dimension or index factor.

Recall the entropy contraction and approximate tensorization constant
\begin{equation}\label{eq:4-6}
\begin{aligned}
 \eta_{\mathrm{av}}&=\sup_{\rho\neq \rho_{\cN }}
 \frac{D(\Phi_{\mathrm{av},*}\rho||\rho_{\cN})}{D(\rho||\rho_{\cN})}, \\
 c_{AT}&=\sup_{\rho\neq \rho_{\cN }}
 \frac{D(\rho||\rho_{\cN})}{D(\rho\Vert\rho_1)+D(\rho\Vert\rho_2)}.
\end{aligned}
\end{equation}
The above discussion shows 
\[  \frac1{2c_{AT}}\leq 1-\eta_{\mathrm{av}}\leq\frac3{2c_{AT}} \]
From there we now can derive AT from the HC of the average map

\iffalse
The denominator defining $c_*$ is positive on this set: if both defects vanish, both
state expectations fix $\rho$, and convergence of the averaged powers gives
$\rho=\rho_{\cN}$. Data processing gives $0\leq\eta_{\mathrm{av}}\leq1$, and
$c_{\mathrm{AT}}=\max\{1,c_*\}$.
Dividing \eqref{eq:4-13}--\eqref{eq:4-15} by $D_{\cN}(\rho)>0$ and taking infima gives
\[
 \frac1{2c_*}\leq\delta_{\mathrm{av}}\leq\frac3{2c_*},
 \qquad 1/(+\infty)=0.
\]
Thus AT has a finite constant if and only if $\delta_{\mathrm{av}}>0$. In that case,
\begin{equation}\label{eq:4-18}
 \frac1{2\delta_{\mathrm{av}}}\leq c_*\leq\frac3{2\delta_{\mathrm{av}}},
 \qquad
 \max\left\{1,\frac1{2\delta_{\mathrm{av}}}\right\}
 \leq c_{\mathrm{AT}}\leq\frac3{2\delta_{\mathrm{av}}}.
\end{equation}
If $\cN=\cM$, both expectations are the identity, all entropy defects vanish, and
$c_{\mathrm{AT}}=1$; the ratio definitions are unnecessary.
The known implication \eqref{eq:4-7} also holds completely, by the same convexity argument.
\fi 
\begin{corollary}[HC implies AT] Let $E_i:B(\mathcal{H})\to\cN_i$, $i=1,2$ be
two $\sigma$-preserving conditional expectations with trivial intersection $ \cN_1\cap \cN_2=\mathbb{C}1$. Suppose for some $1<p<q<\infty$
 $$\|\Phi_{\mathrm{av}}\|_{p\to q,\sigma}\leq1. $$
Then we have 
\[D(\Phi_{\mathrm{av},*}\rho||\sigma )\le \min\{\frac{p}{q},\frac{q'}{p'}\} D(\rho||\sigma ),\] 
where $p'=p/(p-1)$ and $q'=q/(q-1)$. This further implies
\begin{equation}\label{eq:4-22}
 c_{\mathrm{AT}}\leq\frac{3}{2\bigl(1-\min\{\frac{p}{q},\frac{q'}{p'}\}\bigr)}.
\end{equation}
\end{corollary}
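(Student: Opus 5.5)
The plan has three parts: obtain each of the two contraction coefficients from Proposition~\ref{cor:2-4}, take the smaller one, and then feed the result into Theorem~\ref{thm:4-average}.

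\textbf{Step 1: the coefficient $p/q$.} The map $\Phi_{\mathrm{av}}$ is unital and completely positive. It preserves $\sigma$, since each $E_{i*}\sigma=\sigma$. Proposition~\ref{cor:2-4} therefore applies directly to the hypothesis $\|\Phi_{\mathrm{av}}\|_{p\to q,\sigma}\le1$ and gives $D(\Phi_{\mathrm{av},*}\rho\|\sigma)\le \frac pq D(\rho\|\sigma)$.

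\textbf{Step 2: the coefficient $q'/p'$.} This part uses KMS symmetry of $\Phi_{\mathrm{av}}$. Each $E_i$ is the $L_2(\sigma)$-orthogonal projection onto $\cN_i$ (as recalled from \cite{r4}), so $E_i^\#=E_i$. Hence $\Phi_{\mathrm{av}}^\#=\Phi_{\mathrm{av}}$.

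By the duality \eqref{eq:2-15},
\[
\|\Phi_{\mathrm{av}}\|_{q'\to p',\sigma}=\|\Phi_{\mathrm{av}}^\#\|_{q'\to p',\sigma}=\|\Phi_{\mathrm{av}}\|_{p\to q,\sigma}\le 1 .
\]
Because $p<q$ we have $q'<p'$, and $1<q'$ since $q<\infty$. Applying Proposition~\ref{cor:2-4} again, now with exponents $(q',p')$, gives the coefficient $q'/p'$.

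Taking the smaller of the two coefficients yields the first displayed inequality. The only point to verify here is that the KMS adjoint of a $\sigma$-preserving conditional expectation is itself, i.e.\ $\Gamma_\sigma^{-1}E_{i*}\Gamma_\sigma=E_i$. This is exactly the orthogonal-projection characterization, so I do not expect it to be an obstacle.

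\textbf{Step 3: from entropy contraction to AT.} Since $\cN_1\cap\cN_2=\mathbb C1$, the expectation $E_{\cN}$ is $E_\sigma$. Then $\rho_{\cN}=E_{\sigma*}\rho=\Tr(\rho)\,\sigma=\sigma$ for every state $\rho$. So the contraction from Step 2 is exactly hypothesis \eqref{eq:EC2} with $\eta=\min\{p/q,\,q'/p'\}<1$. Theorem~\ref{thm:4-average} then gives $c_{\mathrm{AT}}\le 3/(2(1-\eta))$, which is \eqref{eq:4-22}.

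The identification $\rho_{\cN}=\sigma$ is immediate from $E_{\sigma*}\rho=\Tr(\rho)\sigma$. The strict inequality $\eta<1$ follows from $p<q$, and it is needed so that Theorem~\ref{thm:4-average} applies.
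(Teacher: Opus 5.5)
Your proposal is correct and follows the argument the paper intends. You apply Proposition~\ref{cor:2-4} at $(p,q)$, then use the KMS self-adjointness of $\Phi_{\mathrm{av}}$ together with the duality \eqref{eq:2-15} to get hypercontractivity at $(q',p')$ and hence the coefficient $q'/p'$. Finally you apply Theorem~\ref{thm:4-average} with $\rho_{\cN}=E_{\sigma*}\rho=\sigma$ and $\eta=\min\{p/q,q'/p'\}<1$. The paper's only remark is that it uses KMS self-adjointness, and your justification of it is sound: each $E_i$ is a $\sigma$-preserving conditional expectation, hence GNS-symmetric and therefore KMS-symmetric.
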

Note that here we used the average $\Phi_{\mathrm{av}}$ is KMS self-adjoint. For each fixed system with scalar intersection, this implies 
$\|\Phi_{\mathrm{av}}-E_\sigma\|_{2\to2}<1$.
Together with any finite $2\to q$ endpoint, Theorem~\ref{prop:2-9} also supplies an HC
exponent and hence a finite AT constant.\\

\noindent {\bf Acknowledgement. } 
The authors thank Cambyse Rouz\'e for helpful discussion on the idea of deriving entropy contraction from hypercontractivity.
Li Gao and Lijun Wang are supported
in part by the National Natural Science Foundation of China (grant no. 12401163) and the
Department of Science and Technology of Hubei Province (project nos. 2025EHA041 and
2025AFA044). The authors acknowledge assistance from ChatGPT in developing the proof strategy for deriving approximate tensorization from entropy contraction of the averaged channel in Section 4. All arguments were independently verified by the authors.

\bigskip
\begingroup
\small
\setlength{\parindent}{2em}
\setlength{\parskip}{0pt}

\indent\textsc{Li Gao}\par
\indent\textit{School of Mathematics and Statistics,
Wuhan University, Wuhan 430072, China}\par
\indent\textit{Wuhan Institute of Quantum Technology,
Wuhan 430075, China}\par
\indent\textit{E-mail address: }
\texttt{gao.li@whu.edu.cn}\par

\medskip

\indent\textsc{Lijun Wang}\par
\indent\textit{School of Mathematics and Statistics,
Wuhan University, Wuhan 430072, China}\par
\indent\textit{E-mail address: }
\texttt{lijun\_wang@whu.edu.cn}\par

\endgroup

\end{document}